\newcommand{\texta}[1]{{\small\textsf{#1}}}
\renewcommand{\tilde}[1]{\widetilde{#1}}
\newcommand{\loc}{\text{\emph{loc}}}
\renewcommand{\read}{\text{\emph{read}}}
\newtheorem{notation}{Notation}
\begin{document}

\title{Verification of agent knowledge in\\ dynamic access control policies$^*$}

\author{Masoud Koleini, Eike Ritter \and Mark Ryan \\ 
}
%

\institute{School of Computer Science \\ The University of Birmingham \\ Birmingham, B15 2TT, UK}

\maketitle

\begin{abstract}

\let\thefootnote\relax\footnotetext{* The original version of this paper appeared in Lecture Notes in Computer Science (LNCS), Volume 7795, 2013, pp 448--462.}

We develop a modeling technique based on interpreted systems in order to verify temporal-epistemic properties over access control policies. This approach enables us to detect information flow vulnerabilities in dynamic policies by verifying the knowledge of the agents gained by both reading and reasoning about system information. To overcome the practical limitations of state explosion in model-checking temporal-epistemic properties, we introduce a novel abstraction and refinement technique for temporal-epistemic safety properties in ACTLK (ACTL with knowledge modality K) and a class of interesting properties that does fall in this category. 

\end{abstract}




\section{Introduction}

Assume a conference paper review system in which all the PC members have access to the number of the papers assigned to each reviewer. Further assume that a PC member Alice can see the list of the papers that are assigned to another PC member and that are not authored by Alice. Then if Alice is the author of a submitted paper, she can find who the reviewer of her paper is by comparing the number of papers assigned to each reviewer (shown by the system) with the number of the assigned papers of that reviewer which she has access to.

The above is an example of a potential information leakage in \emph{content management systems}, which are collaborative environments that allow users to create, store and manage data. They also allow controlling access to the data based on the user roles. In such multi-agent systems, access to the data is regulated by \emph{dynamic access control policies}, which are a class of authorization rules that the permissions for an agent depend on the state of the system and change when agents interact with the system \cite{DynPAL,synthesising-nmd-2007,DynACS-Dougherty}. In complicated access control scenarios, there is always a risk that some required properties do not hold in the system. For instance and for a conference paper review system, the following properties need to hold in the policy:

\begin{itemize}
	\item It should be impossible for the author of a paper to be assigned as the reviewer of his own paper (temporal safety property).
	\item There must be no way for the author of a paper to find out who is the reviewer of his paper (epistemic safety property).
\end{itemize}

Epistemic properties take \emph{knowledge} of the agents into account. The knowledge can be gained by directly accessing the information, which complies with one of the meanings of the knowledge in ordinary language, that means the agent \emph{sees} the truth. But agent also knows the truth when he indirectly reasons about it \cite{Mardare:2006:DynamicESLogics}.

Information flow as a result of reasoning is a critical vulnerability in many collaborative systems like conference paper review systems, social networks and document management systems, and is difficult to detect. The complication of access control policies in multi-agent collaborative frameworks makes finding such weaknesses more difficult using non-automated mechanisms. Moreover, the state of art dynamic access control verification tools are unable to find such properties as they do not handle epistemic property verification in general. Therefore as the \emph{first contribution} of this paper, we propose a policy authorization language and express how to use the \emph{interpreted systems framework} \cite{Fagin-book-95} in order to model the related access control system. Using interpreted systems enables us to address misconfiguration in the policy and information disclosure to unauthorized agents by verifying temporal-epistemic properties expressed in the logic CTLK (CTL with knowledge modality K). The knowledge of an agent in our modelling covers both the knowledge gained by reasoning and by reading information when access permission is granted.

The practical limitation of interpreted systems is the state explosion for the systems of medium to large state space. There is also a limited number of research on the automated abstraction and refinement of the models defined in interpreted systems framework. As the \emph{second contribution}, we develop an novel fully automated abstraction and refinement technique for verifying safety properties in ACTLK (which is a subset of CTLK) over an access control system modelled in the framework of interpreted systems. We extend counterexample guided abstraction refinement \cite{CEGAR} to cover the counterexamples generated by the verification of temporal-epistemic properties and when the counterexample is tree-like \cite{Clarke-treelikecounterexamples}. In this paper, we only discuss the counterexamples with finite length paths, but this approach can be extended to the paths of infinite length using an unfolding mechanism \cite{CEGAR}. We use a model-checker for multi-agent systems \cite{Lomuscio:2006:MCMAS} and build the abstract model in its modelling language. The refinement is guided using the counterexample generated by the model-checker. The counterexample checking algorithm is provably sound and complete. We also introduce an interactive refinement for a class of epistemic properties that does not fall in ACTLK, but can specify interesting security properties.

The reminder of the technical report is organized as follows: Related works are discussed in section \ref{sec:relatedw}, interpreted systems are introduced in section \ref{sec:is}, formal syntax and semantics of access control policies are provided in section \ref{sec:syntax}, deriving an interpreted system from a policy is described in section \ref{sec:buildis}, abstraction and refinement technique is given in sections \ref{sec:absref} and \ref{sec:ref}. Case studies and experimental results are included in section \ref{sec:experiment}.


\section{Related work}
\label{sec:relatedw}

In the area of knowledge-based policy verification, Aucher et al. \cite{Aucher2010-modal} define \emph{privacy policies} in terms of permitted or forbidden knowledge. The dynamic part of their logic deals with sending or broadcasting data. Their approach is limited in modeling knowledge gained by the interaction of the agents in a multi-agent system. RW framework \cite{synthesising-nmd-2007} has the most similar approach with ours. The transition system in RW is build over the knowledge of the active coalition of agents. In each state, the knowledge of the coalition is the accumulation of the knowledge obtained by performing actions or sampling system variables in previous transitions together with the initial knowledge. In the other words, knowledge in RW is gained by reading or altering system variables, not by reasoning about them. This is similar to PoliVer \cite{Koleini:2011:PoliVer}, which approximates knowledge by readability. Such verification tools are not able to detect information flow as a result of reasoning.

In the field of abstraction and refinement for temporal-epistemic logic, Cohen et al. \cite{Abstract-MAS-Cohen} introduce the theory of simulation relation and existential abstraction for interpreted systems. Their approach is not automated and they have not provided how to refine the abstract model if the property does not hold and the counterexample is spurious. A recent research on abstraction and refinement for interpreted systems is done by Zhou et al. \cite{Zhou-CEGAR}. Although their work is about abstraction and refinement of interpreted systems, their paper is abstract and mainly discusses the technique to build up a tree-like counterexample when verifying ACTLK properties.


\section{Background}
\label{sec:is}

\subsection{Interpreted systems}
Fagin et al. \cite{Fagin:97:knowledge} introduced interpreted systems as the framework to model multi-agent systems in games scenarios. They introduced a detailed transition system which contains agents, local states and actions. Such a framework enables reasoning about both temporal and epistemic properties of the system. Lomuscio et al \cite{Lomuscio-mck-06} have used a variant of interpreted systems to verify ATLK (alternating time temporal logic \cite{ATL-Alure} with knowledge) properties over the interpreted systems. They have also developed a model-checker for interpreted systems called MCMAS \cite{Lomuscio:2006:MCMAS} which we will use as the model-checking engine in our implementation.

The multi-agent system formalism known as \emph{interpreted systems (IS)} \cite{Fagin-book-95,Fagin:97:knowledge} contains a set $\Omega=\{e,1,\dots ,n\}$ of agents including the \emph{environment} $e$ with the same specification as the other agents. Interpreted systems contain the following elements:

\begin{itemize}
 \item \textbf{Local states:} Each agent in a multi-agent framework has its own local state. The set of local states for the agent $i$ is denoted by $L_i$. The local state of an agent represents the information the agent has direct access to. The environment can be seen as the agent which is capable of capturing or holding the information that is inaccessible to the other agents. For example, the communication channel in a bit transmission protocol can be modelled as the environment. The set of \emph{global states} is $S=L_e\times L_1\times\dots\times L_n$, representing the system at a specific time. The system evolves as a function over the time. We also use the notation of $L_i$ as the function that accepts a set of global states and returns the corresponding set of local states for agent $i$. For each $s\in S$, $l_i(s)$ denotes the local state of agent $i$ in $s$.

\item \textbf{Actions:} State transitions are the result of performing actions by different agents. If $i\in\Omega$, then $ACT_i$ is the set of actions accessible for the agent $i$. The set of \emph{joint actions} is defined as $ACT=ACT_e\times ACT_1\times\dots\times ACT_n$. We also use $ACT_i$ as the function that accepts a joint action and returns the action of agent $i$.

\item \textbf{Protocols:} Protocols are defined as mappings from the set of local states to the set of local actions and define the actions each agent can perform according to its local state ($P_i:L_i\rightarrow 2^{ACT_i}\backslash \{\emptyset\}, i\in\Omega$). In general, action performance is non-deterministic.
\end{itemize}

\begin{definition}[Interpreted system]
\label{def:IS}
	Let $\Phi$ be a set of atomic propositions and $\Omega=\{e,1,\dots,n\}$ be a set of agents. An \emph{interpreted system} $I$ is a tuple:
\begin{equation*}
	I=\langle(L_i)_{i\in\Omega}, (P_i)_{i\in\Omega}, (ACT_i)_{i\in\Omega}, S_0, \tau, \gamma\rangle
\end{equation*}
where (1) $L_i$ is the set of local states of agent $i$, and the set of global states is defined as $S=L_e\times L_1\times\dots\times L_n$ (2) $ACT_i$ is the set of actions that agent $i$ can perform, and $ACT=ACT_e\times ACT_1\times\dots\times ACT_n$ is defined as the set of joint actions (3) $S_0\subseteq S$ is the set of initial states (4) $\gamma:S\times\Phi\rightarrow \{\top, \bot\}$ is called the \emph{interpretation function} (5) $P_i:L_i\rightarrow 2^{ACT_i}\backslash \{\emptyset\}$ is the protocol for agent $i$ (6) $\tau:ACT\times S\rightarrow S$ is called the \emph{partial transition function} with the property that if $\tau(\alpha, s)$ is defined, then for all $i\in\Omega:~ACT_i(\alpha)\in P_i(l_i(s))$. We also write $s_1\xrightarrow{\alpha}s_2 $ if $\tau(\alpha, s_1) = s_2$.
\end{definition}

\label{sec:epistemicrelation}

\begin{definition}[Reachability]
	A global state $s\in S$ is \emph{reachable} in the interpreted system $I$ if there exists $s_0\in S_0$, $s_1,\dots,s_n\in S$ and $\alpha_1,\dots,\alpha_n\in ACT$ such that for all $1\leq i\leq n:~s_i=\tau(\alpha_i, s_{i-1})$ and $s=s_n$. In this paper, we use $G$ to denote the set of reachable states.
\end{definition}

For an interpreted system $I$ and each agent $i$ we define an epistemic accessibility relation on the global states as follows:

\begin{definition}[Epistemic accessibility relation]
  Let $I$ be an interpreted system and $i$ be an agent. We define the \emph{Epistemic accessibility relation for agent $i$}, written $\sim_i$,  on the global states of $I$ by $s\sim_i s' \quad\text{ iff }\quad l_i(s)=l_i(s') \mbox{ and $s$ and $s'$ are reachable}$.

\end{definition}

\subsection{CTLK logic}

We specify our properties in CTLK \cite{Lomuscio:2006:CMC}. CTL (Computational Tree Logic) is a branching-time temporal logic which has tree-like time model structure and allows quantification over paths, and CTLK adds the epistemic modality K to the CTL. CTLK is defined as follows:

\begin{definition}
Let $\Phi$ be a set of atomic propositions and $\Omega$ be a set of agents. If $p\in\Phi$ and $i\in\Omega$, then CTLK formulae are defined by:
\begin{equation*}
	\phi ::= p~|~\neg \phi~|~\phi\vee\phi~|~K_i\phi~|~EX\phi~|~EG\phi~|~E(\phi U\phi)
\end{equation*}
\end{definition}

The symbol $E$ is existential path quantifier which means ``there exists at least one path'''. Temporal connectives $X$, $G$ and $U$ mean ``neXt state'', ``all future states (Globally)'' and ``Until'''. $EX$, $EG$ and $EU$ provide the adequate set of CTLK connectives. For instance, safety properties defined by $AG(\phi)$ (all future states (Globally)) where $A$ is the universal path quantifier, can be written as $\neg E(\top U \neg\phi)$, or the equivalence for liveness properties $AF(\phi)$ (always for some future state) is $\neg EG(\neg\phi)$. Epistemic connective $K_i$ means ``agent $i$ knows that''.

\begin{example}
\label{ex:ctlk}
	Consider a conference paper review system. Assume that $\texta{a}_1$ is the author of the paper $\texta{p}_1$. Then the safety property that says if all the papers are assigned to the reviewers and $\texta{a}_2$ is the reviewer of $\texta{p}_1$, then $\texta{a}_1$ does not know the fact that $\texta{a}_2$ is the reviewer of his paper can be defined as: $AG(\texta{reviewer}(\texta{p}_1, \texta{a}_2)\rightarrow \neg K_{\texta{a}_1}\texta{reviewer}(\texta{p}_1, \texta{a}_2))$.

In an student information system, the property that states no two students can be assigned as the demonstrator of each other is specified by: $AG(\neg(\texta{demonstratorOf}(\texta{a}_2, \texta{a}_3)\wedge \texta{demonstratorOf}(\texta{a}_3, \texta{a}_2)))$.	

\end{example}

\begin{definition}[Satisfaction relation]
Let $I$ be an interpreted system, $s\in G$ where $G$ is the set of reachable states and $p\in\Phi$ where $\Phi$ is the set of atomic propositions. For any CTLK-formula $\phi$, the notation $(I,s)\models\phi$ means $\phi$ holds at state $s$ in interpreted system $I$. The relation $\models$ is defined inductively as follows:
\begin{eqnarray*}
\begin{aligned}
	&(I,s)\models p &\Leftrightarrow& \quad\gamma(s,p)=\top\\
	&(I,s)\models \neg \phi &\Leftrightarrow& \quad (I,s)\not\models \phi \\
	&(I,s)\models \phi_1\vee\phi_2 &\Leftrightarrow& \quad (I,s)\models\phi_1 \text{ or } (I,s)\models \phi_2\\
	&(I,s)\models K_i\phi &\Leftrightarrow& \quad (I,s')\models\phi \text{ for all } s'\in G \text{ such that } s\sim_i s' \\
	&(I,s)\models EX\phi &\Leftrightarrow& \quad\text{for some $s'$ such that } s\xrightarrow{\alpha} s':  (I,s')\models\phi\\
	&(I,s)\models EG\phi &\Leftrightarrow& \quad\text{there exists a path } s_1\xrightarrow{\alpha}\dots \text{ such that }s=s_0 \text{ and for all }\\
		\omit\rlap{$\qquad i\geq 0:(I,s_i)\models\phi$} \\
	&(I,s)\models E(\phi_1 U\phi_2) &\Leftrightarrow& \quad\text{there exists a path } s_1\xrightarrow{\alpha}\dots  \text{ such that }s=s_1, \text{ there is }\\
		 \omit\rlap{$\qquad$ some $i\geq 1$ such that $(I,s_i)\models\phi_2$ and for all $j<i$ we have $(I,s_j)\models\phi_1$}
\end{aligned}
\end{eqnarray*}

We use the notation $I\models\phi$ if for all $s_0\in S_0:~(I,s_0)\models\phi$.
\end{definition}


\section{Policy syntax}
\label{sec:syntax}

Multi-agent access control systems grant or deny user access to the resources and services depending on the access rights defined in the policy. Access to the resources is divided into \emph{write access}, which when granted, allows updating some system variables (in the context of this work, Boolean variables) and \emph{read access}, that returns the value of some variables when granted. In this section, we present a simple policy syntax to define actions, permissions and evolutions. In the following section, we give semantics of the policy language by constructing an interpreted system from it.

\paragraph{Technical preliminaries}
Let $V$ be a finite set of variables and $Pred$ a finite set of predicates. The notation $\vec{v}$ is used to specify a sequence of distinct variables. An \emph{atomic formula} or simply an \emph{atom} is a predicate that is applied to a sequence of variables with the appropriate length. An access control policy is a finite set of rules defined as follows: 
\begin{eqnarray*}
	\begin{aligned}
		&L::= \top~|~\bot~|~w(\vec{v})~|~L \vee L~|~L \wedge L~|~ L\rightarrow L~|~\neg L~|~\forall v\;[L]~|~\exists v\;[L] \\
		&W::= +w(\vec{v})~|~-w(\vec{v})~|~\forall v.\;W\\
		&W_s::= W~|~W_s, W\\
		&A_R::= \texta{id}(\vec{v}):\{W_s\}\leftarrow L \qquad\text{Action rule}\\
		&R_R ::= \texta{id}(\vec{v}):w(\vec{u})\leftarrow L \qquad\text{Read permission rule}
	\end{aligned}
\end{eqnarray*}
In the above, $w\in Pred$, and $w(\vec{v})$ is an atom. $L$ denotes a logical formula over atoms, which is the condition for performing an action or reading information. $\{W_s\}$ is the effect of the action that include the updates. $+w(\vec{v})$ in the effect means executing the action will set the value of $w(\vec{v})$ to \texta{true} and $-w(\vec{v})$ means setting the value to \texta{false}. In the case of $\forall v. W$ in the effect, the action updates the signed atom in $W$ for all possible values of $v$. In the case that an atom appears with different signs in multiple quantifications in the effect (for instance, $w(c,d)$ in $\forall x.+w(c,x), \forall y.-w(y,d)$), then only the sign of the last quantification is considered for the atom. \texta{id} indicates the identifier of the rule.

Let $a(\vec{v}):E\leftarrow L$ be an action rule. The \emph{free variables} of the logical formula $L$ are denoted by \textbf{fv}($L$) and are defined in the standard way. We also define \textbf{fv}$(E)=\bigcup_{e\in E}\textbf{fv}(e)$ where \textbf{fv}$(\pm w(\vec{x}))=\vec{x}$ and \textbf{fv}$(\forall x. W)$=\textbf{fv}$(W)\backslash x$. We stipulate: \textbf{fv}$(E)\cup$\textbf{fv}$(L)\subseteq \vec{v}$. If $r(\vec{v}):w(\vec{u})\leftarrow L$ is a read rule, then \textbf{fv}$(\vec{u})\cup$\textbf{fv}$(L)\subseteq \vec{v}$.

Let $\Sigma$ be a finite set objects. A \emph{ground atom} is a variable-free atom; i.e. atoms with the variables substituted with the objects in $\Sigma$. For instance, if \texta{reviewer}$\in Pred$ and \texta{Bob,Paper}$\in\Sigma$, then \texta{reviewer(Bob,Paper)} is a ground atom. In the context of this paper, we call the ground atoms as (atomic) \emph{propositions}, since they only evaluate to \texta{true} and \texta{false}. 

An \emph{action} $\alpha:\varepsilon\leftarrow\ell$ contains an identifier $\alpha$ together with the \emph{evolution rule} $\varepsilon\leftarrow\ell$, which is constructed by instantiating all the arguments in an action rule $a(\vec{v}):E\leftarrow L$ with the objects in $\Sigma$. We refer to the whole action by its identifier $\alpha$. 

In an asynchronous multi-agent system, it is crucial to know the agent that performs an action. As the convention and for the rest of this article, we consider the first argument of the action to be the agent performing that action. Therefore, in the action \texta{assignReviewer(Alice,Bob,Paper)}, \texta{Alice} is the one that assigns \texta{Bob} as the reviewer of \texta{Paper}. If $\alpha$ is an action, then \textbf{Ag}$(\alpha)$ denotes the agent that performs $\alpha$. 

A \emph{read permission} $\rho:p\leftarrow\ell$ is constructed by substituting the arguments in read permission rule $r(\vec{v}):w(\vec{u})\leftarrow L$ with the objects in $\Sigma$. $\rho$ is the identifier, $p$ is the proposition and $\ell$ is the condition for reading $p$. As for the actions, we assume the first argument in $\rho$ to be the agent that reads the proposition $p$, which is denoted by \textbf{Ag}$(\rho)$.

\begin{definition}[Policy]
\label{def:Policy}
 An \emph{access control policy} is a finite set of actions and read permissions derived by instantiating a set of rules with a finite set of objects.
\end{definition}

\section{Building an interpreted system from a policy}
\label{sec:buildis}

In access control systems, we deal with read and write access procedures. Write procedures, which update a set of variables, are contained in interpreted systems as actions. In interpreted systems, a principal knows a fact if it is included in his local state or he can deduce it by applying logical reasoning. In access control systems and in addition to the local information, agents may obtain permission to directly access some resources in the system. This permission may be granted by the system or other agents (delegation of authority). For instance, in a web application users always have access to their own profile, but they cannot access other users' profile unless the permission is granted by the owners. When a read permission to a resource is granted, the resource will become a part of agent's local state. When the permission is denied, it will be removed from agent's directly accessible information. This behaviour is similar to a system which uses dynamically changing local states to model permissions.

Interpreted systems formally contain local states which cannot change during execution of the system. In order to model temporary read permissions, we need to introduce some locally accessible information, which simulates the temporary read access. In this section, we explain how to introduce temporary read permissions when modelling access control systems. Moreover, we model access control systems in asynchronous manner using interpreted systems framework. An interpreted system is \emph{asynchronous} if all joint actions contain at most one non-$\Lambda$ agent action where $\Lambda$ denotes no-operation.

Given a policy, we build an access control system based on interpreted systems framework by considering the requirements above. Incorporating temporary read permissions requires introducing some information into the local states. We say the proposition $p$ is local to the agent $i$ if its value only depends on the local state of $i$. In the other words, for all $s,s'\in S$ where $s\sim_i s'$ we have $\gamma(s,p)=\gamma(s',p)$.

\begin{definition}[Local interpretation]
Let $L_i$ be the set of local states of agent $i$ in interpreted system $I$ and $\Phi_i$ be the set of local propositions. We define the \emph{local interpretation} for agent $i$  as a function $\gamma_i:L_i\times\Phi_i\rightarrow\{\top,\bot\}$ such that $\gamma_i(l,p)=\gamma(s,p)$ where $l_i(s)=l$ for some global state $s$. We require the set of local propositions to be pairwise disjoint.
\end{definition}

The following lemma provides the theoretical background of modelling knowledge by readability in an interpreted system.

\begin{lemma}
\label{lemma:local-copy}
Let $I$ be an interpreted system, $G$ the set of reachable states, $i$ an agent, $\Phi$ the set of propositions and $p\in\Phi$. Suppose that $p',p''\in\Phi_i$. If for all $s\in G$:
\begin{equation}
\label{eq:lem1}
	\text{if } \gamma_i(l_i(s),p'')=\top \text{ then } (I,s)\models p \;\Leftrightarrow \;\gamma_i(l_i(s),p')=\top
\end{equation}
Then we have:
\begin{equation*}
\gamma_i(l_i(s),p'')=\top \quad\Rightarrow\quad (I,s)\models K_i p \vee K_i\neg p
\end{equation*}
\end{lemma}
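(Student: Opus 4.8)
The goal is to show that whenever the "flag" proposition $p''$ holds at a reachable state $s$, agent $i$ knows the truth value of $p$ at $s$, i.e. $(I,s)\models K_i p \vee K_i \neg p$. The plan is to argue by cases on the actual truth value of $p$ at $s$, and in each case exhibit the relevant $K_i$ fact by unwinding the semantics of the epistemic modality together with the hypothesis \eqref{eq:lem1}.

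First I would fix a reachable state $s \in G$ with $\gamma_i(l_i(s),p'')=\top$. Since $p',p'' \in \Phi_i$ are local propositions, their values at $s$ depend only on $l_i(s)$ — so for any $s' \in G$ with $s \sim_i s'$ we have $l_i(s') = l_i(s)$ and hence $\gamma_i(l_i(s'),p'') = \gamma_i(l_i(s),p'') = \top$ and likewise $\gamma_i(l_i(s'),p') = \gamma_i(l_i(s),p')$. This is the key "propagation of the flag and the copy along $\sim_i$" observation, and it is an immediate consequence of the definition of local interpretation and of $\sim_i$ (which requires $l_i(s)=l_i(s')$).

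Next, split on whether $(I,s)\models p$. Suppose $(I,s)\models p$. By hypothesis \eqref{eq:lem1} applied at $s$ (legitimate since $\gamma_i(l_i(s),p'')=\top$), this gives $\gamma_i(l_i(s),p')=\top$. Now take any $s' \in G$ with $s\sim_i s'$; by the propagation observation, $\gamma_i(l_i(s'),p'')=\top$ and $\gamma_i(l_i(s'),p')=\top$. Applying \eqref{eq:lem1} at $s'$ (again the premise $\gamma_i(l_i(s'),p'')=\top$ holds), we get $(I,s')\models p \Leftrightarrow \gamma_i(l_i(s'),p')=\top$, and the right-hand side holds, so $(I,s')\models p$. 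Since $s'$ was an arbitrary $\sim_i$-accessible reachable state, the semantics of $K_i$ yields $(I,s)\models K_i p$. The case $(I,s)\not\models p$ is symmetric: \eqref{eq:lem1} at $s$ forces $\gamma_i(l_i(s),p')=\bot$, this propagates to every $\sim_i$-accessible $s'$, and \eqref{eq:lem1} at each such $s'$ gives $(I,s')\not\models p$, whence $(I,s)\models K_i\neg p$. In either case $(I,s)\models K_i p \vee K_i\neg p$, as required.

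There is no real obstacle here — the argument is a routine unfolding once one notices that \eqref{eq:lem1} must be invoked not only at $s$ but at every $\sim_i$-related state, and that the flag $p''$ and the copy $p'$ travel along $\sim_i$ precisely because they are local. The only point needing a little care is making sure the hypothesis of \eqref{eq:lem1} (namely $\gamma_i(l_i(s'),p'')=\top$) is genuinely available at the accessible states $s'$ before applying it there; that is handled by the propagation observation in the second paragraph. Note also that $p'' \in \Phi_i$ guarantees $\gamma_i(l_i(s),p'')$ is well defined, and the disjointness requirement on local proposition sets, while needed for other results, is not actually used in this proof.
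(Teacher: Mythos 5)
Your proof is correct and follows essentially the same route as the paper's: a case split on the truth value of $p$ at $s$, propagation of the local propositions $p'$ and $p''$ along $\sim_i$ via $l_i(s')=l_i(s)$, and a second application of hypothesis (\ref{eq:lem1}) at each accessible state to conclude $K_i p$ (resp.\ $K_i\neg p$). Your explicit remark that (\ref{eq:lem1}) must be invoked at the accessible states as well as at $s$ is exactly the step the paper performs implicitly.
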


\begin{proof}
We first prove that
\begin{equation}
\label{eq:lempr1}
\gamma_i(l_i(s),p'')=\top \text{ and } (I,s)\models p\quad\Rightarrow\quad (I,s)\models K_i p
\end{equation}
Let us assume that $\gamma_i(l_i(s),p'')=\top$ and $(I,s)\models p$. By (\ref{eq:lem1}) we have $\gamma_i(l_i(s),p')=\top$. Consider any state $s_1\in G$ such that $s_1 \sim_i s$. By the definition of $\sim_i$, we have $l_i(s_1)=l_i(s)$. Therefore, $\gamma_i(l_i(s_1),p')=\top$ and $\gamma_i(l_i(s_1),p'')=\top$ which implies $(I,s_1)\models p$. Hence, by the definition of $K_i$ we are able to conclude that $(I,s)\models K_i p$. The proof for the second case:
\begin{equation}
\label{eq:lempr2}
\gamma_i(l_i(s),p'')=\top \text{ and } (I,s)\models \neg p\Rightarrow (I,s)\models K_i \neg p
\end{equation}
is similar to the first proof. Therefore, by (\ref{eq:lempr1}) and (\ref{eq:lempr2}) we have $\gamma_i(l_i(s),p'')=\top \Rightarrow (I,s)\models K_i p \vee K_i\neg p$.
\end{proof}

To model knowledge by readability, we incorporate all the atomic propositions that appear in the policy into the environment. We call those propositions \emph{policy propositions}. Now for each policy proposition $p$ and for each agent, we introduce two local atomic propositions: $p_{\read}$ ($p''$ in Lemma \ref{lemma:local-copy}) as the read permission of proposition $p$, and $p_{\loc}$ ($p'$ in Lemma \ref{lemma:local-copy}) as the local copy of $p$. We modify the transition function in order to satisfy the following property: for all reachable states, if $p_{\read}$ is \texta{true} (agent has read access to $p$) in a state, then $p_{\loc}$ is assigned the same value as $p$. This property guarantees agent's knowledge of proposition $p$ whenever his access to $p$ is granted.

\begin{algorithm*}
\floatname{algorithm}{Procedure}
\caption{Incorporating read permissions into evolution rules}\label{incK}
\begin{algorithmic}[1]
\Function{incKnowledge}{$\mathcal{A}_{\mathcal{C}},\mathcal{R}_{\mathcal{C}},\Phi_{\mathcal{C}}, \Sigma_{Ag}$}
\State $\triangleright$ \textbf{Input}: $\mathcal{A}_{\mathcal{C}}$ is the set of actions, $\mathcal{R}_{\mathcal{C}}$ is the set of read permissions, $\Phi_{\mathcal{C}}$ the set of policy propositions and $\Sigma_{Ag}$ the set of agents
\State $\triangleright$ \textbf{Output}: returns the updated set of actions and the set of local propositions
\State $\mathcal{A}^u_{\mathcal{C}}:=\mathcal{A}_{\mathcal{C}}$
\ForAll {$i \in \Sigma_{Ag}$}
	\State $\Phi_i:=\emptyset$ 
	\ForAll {$p \in \Phi_{\mathcal{C}}$}
		\State \textbf{determine} $r:p\leftarrow \ell_r\in \mathcal{R}_{\mathcal{C}}$ \textbf{where} \textbf{Ag}$(r)=i$
		\State $\Phi_i :=\Phi_i\cup \{p_{\loc},p_{\read}\}$
		\State $\hat{\mathcal{A}}^u_{\mathcal{C}}:=\emptyset$
		\ForAll {$\alpha:\varepsilon\leftarrow\ell \in \mathcal{A}^u_{\mathcal{C}}$} \label{line:forloop}
			\If {$+p\in\varepsilon$}
				\State \textbf{construct }$\alpha_1:\varepsilon\cup\{+p_{\loc},+p_{\read}\}\leftarrow$
				\State $\qquad\ell\wedge (\ell_r[\top/v~|+v\in\varepsilon][\bot/v'~|-v'\in\varepsilon])$ \textbf{where} $\textbf{Ag}(\alpha_1)=\textbf{Ag}(\alpha)$
				\State \textbf{construct }$\alpha_2:\varepsilon\cup\{-p_{\read}\}\leftarrow$
				\State$\qquad\ell\wedge \neg(\ell_r[\top/v~|+v\in\varepsilon][\bot/v'~|-v'\in\varepsilon])$ \textbf{where} $\textbf{Ag}(\alpha_2)=\textbf{Ag}(\alpha)$
				\State $\hat{\mathcal{A}}^u_{\mathcal{C}}:=\hat{\mathcal{A}}^u_{\mathcal{C}}\cup\{\alpha_1,\alpha_2\}$
			\ElsIf {$-p\in\varepsilon$}
				\State \textbf{construct }$\alpha_1:\varepsilon\cup\{-p_{\loc},+p_{\read}\}\leftarrow$
				\State$\qquad\ell\wedge (\ell_r[\top/v~|+v\in\varepsilon][\bot/v'~|-v'\in\varepsilon])$ \textbf{where} $\textbf{Ag}(\alpha_1)=\textbf{Ag}(\alpha)$
				\State \textbf{construct }$\alpha_2:\varepsilon\cup\{-p_{\read}\}\leftarrow$
				\State$\qquad\ell\wedge \neg(\ell_r[\top/v~|+v\in\varepsilon][\bot/v'~|-v'\in\varepsilon])$ \textbf{where} $\textbf{Ag}(\alpha_2)=\textbf{Ag}(\alpha)$
				\State $\hat{\mathcal{A}}^u_{\mathcal{C}}:=\hat{\mathcal{A}}^u_{\mathcal{C}}\cup\{\alpha_1,\alpha_2\}$
			\Else
				\If {for all $q\in\textbf{fv}(\ell_r)$ : $+q\not\in\varepsilon$ and $-q\not\in\varepsilon$} \label{line:simp}
					\State $\hat{\mathcal{A}}^u_{\mathcal{C}}:=\hat{\mathcal{A}}^u_{\mathcal{C}}\cup\{\alpha\}$
				\Else
					\State \textbf{construct }$\alpha_1:\varepsilon\cup\{+p_{\loc},+p_{\read}\}\leftarrow \ell\wedge$
					\State $\quad(\ell_r[\top/v~|+v\in\varepsilon][\bot/v'~|-v'\in\varepsilon])\wedge p\textbf{ where }\textbf{Ag}(\alpha_1)=\textbf{Ag}(\alpha)$
					\State \textbf{construct }$\alpha_2:\varepsilon\cup\{-p_{\loc},+p_{\read}\}\leftarrow \ell\wedge$
					\State $\quad(\ell_r[\top/v~|+v\in\varepsilon][\bot/v'~|-v'\in\varepsilon])\wedge \neg p\textbf{ where } \textbf{Ag}(\alpha_2)=\textbf{Ag}(\alpha)$
					\State \textbf{construct }$\alpha_3:\varepsilon\cup\{-p_{\read}\}\leftarrow \ell\wedge$
					\State $\quad\neg(\ell_r[\top/v~|+v\in\varepsilon][\bot/v'~|-v'\in\varepsilon])\textbf{ where }\textbf{Ag}(\alpha_3)=\textbf{Ag}(\alpha)$
					\State $\hat{\mathcal{A}}^u_{\mathcal{C}}:=\hat{\mathcal{A}}^u_{\mathcal{C}}\cup\{\alpha_1,\alpha_2,\alpha_3\}$
				\EndIf
			\EndIf
		\EndFor
		\State $\mathcal{A}^u_{\mathcal{C}}:=\hat{\mathcal{A}}^u_{\mathcal{C}}$
	\EndFor
\EndFor
	\State \textbf{return }$\{\Phi_i~|~i\in\Sigma_{Ag}\}$, $\mathcal{A}^u_{\mathcal{C}}$
\EndFunction
\end{algorithmic}
\label{algo:incK}
\end{algorithm*}

\paragraph{Building the interpreted system}
Given a policy $\mathcal{C}$ with $\Sigma_{Ag}$ as the set of agents, we build up an interpreted system that models the access control system in the following way:

Let $\Phi_{\mathcal{C}}$ be the set of propositions that appear in $\mathcal{C}$ (policy propositions), and $\mathcal{A}_{\mathcal{C}}$ and $\mathcal{R}_{\mathcal{C}}$ the set of actions and read permissions in $\mathcal{C}$ respectively. For an interpreted system that corresponds to the policy $\mathcal{C}$, the knowledge gained by reading system information need to be incorporated into the local states of the agents.

Procedure \ref{incK} adopts Lemma \ref{lemma:local-copy} which describes a method to model temporary read permissions. The function \RefTirName{incKnowledge} in procedure \ref{algo:incK} accepts $\mathcal{A}_{\mathcal{C}}$, $\mathcal{R}_{\mathcal{C}}$, $\Phi_{\mathcal{C}}$ and $\Sigma_{Ag}$ as the input. For each agent $i$ in $\Sigma_{Ag}$, Procedure \ref{algo:incK} generates a set of local propositions $\Phi_i$. The local state of agent $i$ consists of all valuations of $\Phi_i$. For each proposition $p\in \Phi_{\mathcal{C}}$, the set $\Phi_i$ contains two propositions $p_{\loc},p_{\read}$ where $p_{\loc}$ is the copy of $p$ and gets updated whenever $p_{\read}$ as the access permission for $p$ is \texta{true} (refer to Lemma \ref{lemma:local-copy} for the details). The procedure modifies the actions and corresponding evolutions in $\mathcal{A}_{\mathcal{C}}$ into the set $\mathcal{A}^u_{\mathcal{C}}$ in order to update the propositions in $\Phi_i$ in the appropriate way. For each action and for each agent, if $p$ appears in the effect (if-conditions in lines 12 and 18), then the action will replace with two freshly created actions: one sets $p_{\read}$ to \texta{true} and $p_{\loc}$ to the same value as $p$ if the read permission of $p$ evaluates to \texta{true} in the next state (lines 13 and 19). Otherwise (read permission of $p$ evaluates to \texta{false} in the next state), $p_{\read}$ will set to \texta{false} and $p_{\loc}$ remains unchanged (lines 15 and 21). If $p$ does not appear in the effect (line 24), $p_{\loc}$ and $p_{\read}$ will only get updated whenever the read permission of $p$ is affected by the action.

\paragraph{Calculating the symbolic transition function:}
We provide the details for calculating the symbolic transition function we use for traversing over a path in our system. The symbolic transition function accepts a set of states as input and returns the result of performing an action over the states of that set.

As a convention, we use $s[p\mapsto m]$ where $s\in S$ to denote the state that is like $s$ except that it maps the proposition $p$ to the value $m$. Let $st\subseteq S$ be a set of states. When performing the action $\alpha:\varepsilon\leftarrow\ell$ in the states of $st$, the transition is only performed in the states that satisfy the permission $\ell$. In the resulting states, the propositions that do not appear in $\varepsilon$ remain the same as in the states that the transition begins. Therefore, we define:

\begin{eqnarray*}
	\Theta_\alpha(st)=\Big\{s[p\mapsto\top\mid +p\in \varepsilon][p\mapsto\bot\mid -p\in \varepsilon] \bigm\vert s\in st, (I,s)\models \ell\Big\}
\end{eqnarray*}

\begin{definition}[Derived interpreted system]
\label{def:derivedis}
Let $\mathcal{C}$ be a policy with $\Sigma_{Ag}$ as the set of agents, $\Phi_{\mathcal{C}}$ the set of policy propositions, and $\mathcal{A}^u_{\mathcal{C}}$ and $\Phi_i$, $i\in\Sigma_{Ag}$ derived from procedure \ref{incK}. Let $\Omega=\{e\}\cup\Sigma_{Ag}$ and $\Phi=\bigcup_{i\in\Omega}\Phi_i$ where $\Phi_e=\Phi_{\mathcal{C}}$. Then the interpreted system derived from policy $\mathcal{C}$ is:

\begin{equation*}
	I_\mathcal{C}=\langle(L_i)_{i\in\Omega}, (P_i)_{i\in\Omega}, (ACT_i)_{i\in\Omega}, S_0, \tau, \gamma\rangle
\end{equation*}

where 
\begin{enumerate}
	\item $L_i$ is the set of local states of agent $i$, where each local state is a valuation of the propositions in $\Phi_i$. The set of global states is defined as $S=L_e\times L_1\times\dots\times L_n$
	\item $ACT_i=\{\alpha\in\mathcal{A}^u_{\mathcal{C}}~|~\textbf{Ag}(\alpha)=i\}\cup\{\Lambda\}$ where $\Lambda$ denotes \emph{no operation}, and a joint action is a $|\Omega|$-tuple such that at most one of the elements is non-$\Lambda$ (asynchronous interpreted system). For simplicity, \emph{we denote a joint action with its non-$\Lambda$ element} 
	\item $S_0\subseteq S$ is the set of initial states 
	\item $\gamma$ is the interpretation function over $S$ and $\Phi$. If  $p\in\Phi_i$ then we have $\gamma(s,p)=\gamma_i(l_i(s),p)$ 
	\item $P_i$  is the protocol for agent $i$ where for all $l\in L_i$: $P_i(l)=ACT_i$ 
	\item $\tau$ is the transition function that is defined as follows: if $\alpha$ is a joint action (or simply, an action) and $s\in S$, then $\tau(\alpha,s) =s'\text{ if }\Theta_{\alpha}(\{s\})=\{s'\}$.
\end{enumerate}
\end{definition}

The system that we derive from policy $\mathcal{C}$ is a special case of interpreted systems where the local states are the valuation of local propositions that are generated by the procedure \RefTirName{incKnowledge}.

\section{Abstraction technique}
\label{sec:absref}

In an interpreted system, the state space exponentially increases when extra propositions are added into the system. Considering a fragment of CTLK properties known as ACTLK as the specification language, we are able to verify the properties over an over-approximated abstract model instead of the concrete one. ACTLK is defined as follows: 

\begin{definition}
Let $\Phi$ be the set of atomic propositions and $\Omega$ set of agents. If $p\in\Phi$ and $i\in\Omega$, then ACTLK formulae are defined by:
\begin{equation*}
	\phi ::= p~|~\neg p~|~\phi\wedge\phi~|~\phi\vee\phi~|~K_i\phi~|~AX\phi~|~A(\phi U\phi)~|~A(\phi R\phi)
\end{equation*}
\end{definition}

where the symbol $A$ is universal path quantifier which means ``for all the paths''. 

To provide a relation between the concrete model and the abstract one, we extend the \emph{simulation relation} introduced in \cite{ClarkeMC} to cover the epistemic relation between states. Using the abstraction technique that preserves simulation relation between the concrete model and the abstract one, we are able to verify ACTLK specification formulas over the model. In this paper and for abstraction and refinement, we focus on safety properties expressed in ACLK. The advantages of safety properties are first, they are capable of expressing \emph{policy invariants}, and second, the generated counterexample contains finite sequence of actions (or transitions). We can extend the abstraction refinement method to the full ACTLK by unfolding the loops in the counterexamples into finite transitions as described in \cite{CEGAR}, which is outside the scope of this paper.

\subsection{Existential abstraction}

The general framework of existential abstraction was first introduced by Clark et. al in \cite{ClarkeMC}. Existential abstraction partitions the states of a model into clusters, or equivalence classes. The clusters form the states of the abstract model. The transitions between the clusters in the abstract model give rise to an over-approximation of the original (or concrete) model that \emph{simulates} the original one. So, when a specification in ACTL (or in the context of this paper, ACTLK) logic is true in the over-approximated model, it will be true in the concrete one. Otherwise, a counterexample will be generated which needs to be verified over the concrete model.

\begin{notation}
For simplicity, we use the same notation ($\sim_i$) for the epistemic accessibility relation in both the concrete and abstract interpreted systems.
\end{notation}

\begin{definition}[Simulation]
\label{def:simulation}
Let $I$ and $\tilde{I}$ be two interpreted systems, $\Omega$ be the set of agents in both systems, and $\Phi$ and $\tilde{\Phi}$ the corresponding set of propositions where $\tilde{\Phi}\subseteq\Phi$. The relation $H\subseteq S\times \tilde{S}$  is \emph{simulation relation} between $I$ and $\tilde{I}$ if and only if:

\begin{enumerate}
	\item \label{item:sim1}For all $s_0\in S_0$, there exists $\tilde{s}_0\in\tilde{S_0}$ st. $(s_0,\tilde{s}_0)\in H$.
\end{enumerate}

and for all $(s,\tilde{s})\in H$:

\begin{enumerate}
\setcounter{enumi}{1}
	\item \label{item:sim2}For all $p\in\tilde{\Phi}:~\gamma(s,p)=\tilde{\gamma}(\tilde{s},p)$
	\item \label{item:sim3}For each state $s'\in S$ such that $\tau(s,\alpha)=s'$ for some $\alpha\in ACT$, there exists $\tilde{s}'\in\tilde{S}$ and $\tilde{\alpha}\in \tilde{ACT}$ such that $\tilde{\tau}(\tilde{s},\tilde{\alpha})=\tilde{s}'$ and $(s',\tilde{s}')\in H$.
	\item \label{item:sim4}For each state $s'\in S$ such that $s\sim_i s'$, there exists $\tilde{s}'\in\tilde{S}$ such that $\tilde{s}\sim_i \tilde{s}'$ and $(s',\tilde{s}')\in H$.
\end{enumerate}
\end{definition}

The above definition for simulation relation over the interpreted systems is similar to the one for Kripke model \cite{CEGAR}, except that the relation for the epistemic relation is introduced. If such simulation relation exists, we say that $\tilde{I}$ \emph{simulates} $I$ (denoted by $I \preceq \tilde{I}$).

If $H$ is a function, that is, for each $s\in S$ there is a unique $\tilde{s}\in\tilde{S}$ such that $(s,\tilde{s})\in H$, we write $h(s)=\tilde{s}$ instead of $(s,\tilde{s})\in H$.

\begin{lemma}
\label{lem:equivpath}
Let $I \preceq \tilde{I}$, $s_1\in S$, $\tilde{s}_1\in\tilde{S}$ and $(s_1,\tilde{s}_1)\in H$ where $H$ is the simulation relation between $I$ and $\tilde{I}$. Then for each path $s_1\xrightarrow{\alpha_2}\dots$ in $I$, there exists a path $\tilde{s}_1\xrightarrow{\tilde{\alpha}_2}\dots$ in $\tilde{I}$ such that for all $i\geq 1$, $(s_i,\tilde{s}_i)\in H$ holds. 
\end{lemma}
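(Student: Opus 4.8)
The plan is to proceed by induction on the length of the prefix of the path in $I$, using clause \ref{item:sim3} of the simulation relation at each step to extend the corresponding path in $\tilde I$. First I would set up the induction: the claim to prove is that for every $n\geq 1$ and every path $s_1\xrightarrow{\alpha_2} s_2 \xrightarrow{\alpha_3}\cdots\xrightarrow{\alpha_n} s_n$ in $I$ with $(s_1,\tilde s_1)\in H$, there exists a path $\tilde s_1\xrightarrow{\tilde\alpha_2}\tilde s_2\xrightarrow{\tilde\alpha_3}\cdots\xrightarrow{\tilde\alpha_n}\tilde s_n$ in $\tilde I$ with $(s_i,\tilde s_i)\in H$ for all $1\leq i\leq n$. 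The base case $n=1$ is immediate from the hypothesis $(s_1,\tilde s_1)\in H$.

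For the inductive step, suppose the statement holds for prefixes of length $n$, and consider a path in $I$ of length $n+1$, namely $s_1\xrightarrow{\alpha_2}\cdots\xrightarrow{\alpha_n}s_n\xrightarrow{\alpha_{n+1}}s_{n+1}$. By the induction hypothesis applied to the length-$n$ prefix, I obtain a matching path $\tilde s_1\xrightarrow{\tilde\alpha_2}\cdots\xrightarrow{\tilde\alpha_n}\tilde s_n$ in $\tilde I$ with $(s_i,\tilde s_i)\in H$ for $i\leq n$; in particular $(s_n,\tilde s_n)\in H$. Now apply clause \ref{item:sim3} of Definition \ref{def:simulation} to the pair $(s_n,\tilde s_n)\in H$ and the transition $\tau(s_n,\alpha_{n+1})=s_{n+1}$: this yields some $\tilde s_{n+1}\in\tilde S$ and $\tilde\alpha_{n+1}\in\tilde{ACT}$ with $\tilde\tau(\tilde s_n,\tilde\alpha_{n+1})=\tilde s_{n+1}$ and $(s_{n+1},\tilde s_{n+1})\in H$. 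Appending this transition to the path built so far completes the step. Taking the union over all finite prefixes (or, for an infinite path, observing that the constructed prefixes are consistent and so determine an infinite path — noting the paper's remark that only finite-length paths are treated here) gives the desired path in $\tilde I$.

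I do not expect any serious obstacle here: the lemma is essentially the standard ``simulation implies path lifting'' fact, and the only subtlety is bookkeeping — making sure the path in $\tilde I$ is constructed coherently as a single sequence rather than as unrelated finite prefixes, and being careful that clause \ref{item:sim3} is stated exactly in the form needed (it is, matching a concrete transition out of $s_n$ with an abstract transition out of any $H$-partner of $s_n$). The epistemic clause \ref{item:sim4} is not needed for this lemma; it will be used elsewhere, for the $K_i$ case of the preservation theorem. If anything requires a word of care, it is only the transition from the finite-prefix statement to a statement about the (possibly infinite) path $s_1\xrightarrow{\alpha_2}\cdots$, which is handled by the usual observation that the choices made at stage $n$ are never revised at stage $n+1$.
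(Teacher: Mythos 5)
Your proposal is correct and is exactly the argument the paper intends: the paper's proof is the one-line remark that the result follows ``by item \ref{item:sim3} in definition \ref{def:simulation} and induction over the state transitions,'' which is precisely the induction on path prefixes you spell out. Your additional care about assembling the finite prefixes into a single (possibly infinite) path is a reasonable elaboration but does not change the approach.
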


\begin{proof}
The proof is trivial by item \ref{item:sim3} in definition \ref{def:simulation} and induction over the state transitions.
\end{proof}

\begin{proposition}
\label{prop:verification}
For every ACTLK formula $\varphi$ over propositions $\tilde{\Phi}$, if $I \preceq \tilde{I}$ and $\tilde{I}\models\varphi$, then $I\models\varphi$.
\end{proposition}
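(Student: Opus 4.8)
The plan is to prove a slightly stronger statement by structural induction on the ACTLK formula $\varphi$ over $\tilde\Phi$, namely: for every pair $(s,\tilde s)\in H$ with $s\in G$, if $(\tilde I,\tilde s)\models\varphi$ then $(I,s)\models\varphi$. The proposition then follows at once: given $s_0\in S_0$, item~\ref{item:sim1} of Definition~\ref{def:simulation} supplies some $\tilde s_0\in\tilde S_0$ with $(s_0,\tilde s_0)\in H$; since $\tilde I\models\varphi$ we have $(\tilde I,\tilde s_0)\models\varphi$, hence $(I,s_0)\models\varphi$, and as $s_0$ was arbitrary, $I\models\varphi$.

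For the induction, the base cases $\varphi=p$ and $\varphi=\neg p$ with $p\in\tilde\Phi$ are discharged directly by item~\ref{item:sim2}, which gives $\gamma(s,p)=\tilde\gamma(\tilde s,p)$; note this is exactly where $\tilde\Phi\subseteq\Phi$ is used. The cases $\varphi=\phi_1\wedge\phi_2$ and $\varphi=\phi_1\vee\phi_2$ are immediate from the induction hypotheses applied to the same pair $(s,\tilde s)$. For $\varphi=K_i\phi$: assuming $(\tilde I,\tilde s)\models K_i\phi$, take any $s'\in G$ with $s\sim_i s'$; by item~\ref{item:sim4} there is $\tilde s'$ with $\tilde s\sim_i\tilde s'$ and $(s',\tilde s')\in H$, so $(\tilde I,\tilde s')\models\phi$, and the induction hypothesis yields $(I,s')\models\phi$; since $s'$ was an arbitrary $\sim_i$-successor, $(I,s)\models K_i\phi$.

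The temporal cases all run through Lemma~\ref{lem:equivpath}. For $\varphi=AX\phi$, $A(\phi_1 U\phi_2)$ or $A(\phi_1 R\phi_2)$, to establish the formula at $s$ we pick an arbitrary path $\pi = (s=s_1\xrightarrow{\alpha_2}s_2\xrightarrow{\alpha_3}\cdots)$ in $I$, use Lemma~\ref{lem:equivpath} to obtain a path $\tilde\pi=(\tilde s=\tilde s_1\xrightarrow{\tilde\alpha_2}\tilde s_2\xrightarrow{\tilde\alpha_3}\cdots)$ in $\tilde I$ with $(s_j,\tilde s_j)\in H$ for all $j$ (and every $s_j\in G$, since $s_1\in G$), invoke $(\tilde I,\tilde s)\models\varphi$ to get the corresponding temporal witness along $\tilde\pi$, and translate it back position-by-position using the induction hypothesis on the subformulas at the matched pairs $(s_j,\tilde s_j)$. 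For $AX$ this is just the single successor; for $A(\phi_1 U\phi_2)$ it is the index $i$ at which $\phi_2$ is witnessed on $\tilde\pi$, which then witnesses it on $\pi$ with $\phi_1$ holding at all earlier positions; for $A(\phi_1 R\phi_2)$ one distinguishes whether $\phi_2$ holds at every position of $\tilde\pi$ or $\phi_1$ is first achieved at some position $j$ with $\phi_2$ holding at all positions $\le j$, and in either subcase the induction hypothesis transfers the pattern to $\pi$.

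The main obstacle is bookkeeping rather than ideas: one must (i) phrase the induction hypothesis in the ``for all $H$-related pairs'' form rather than as $\models$, because the $K_i$ clause and the path clauses need it at states other than the initial ones, and (ii) use the epistemic clause (item~\ref{item:sim4}) and the transition clause (item~\ref{item:sim3}, packaged as Lemma~\ref{lem:equivpath}) each in the correct direction --- concrete behaviour is lifted to the abstract model, and the universally quantified guarantee is pulled back down. The argument is uniform in whether paths are finite or infinite. The only spot that genuinely requires care is the release operator $R$, whose two satisfaction modes must be handled as separate subcases.
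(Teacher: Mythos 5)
Your proposal is correct and follows essentially the same route as the paper's proof: strengthen to the statement over all $H$-related pairs, induct on the structure of $\varphi$, discharge the atomic cases via item~\ref{item:sim2}, the $K_i$ case via item~\ref{item:sim4}, and the temporal cases via Lemma~\ref{lem:equivpath}, then specialise to initial states using item~\ref{item:sim1}. Your only additions are minor points of care the paper glosses over (restricting the induction hypothesis to reachable $s$, and spelling out the two satisfaction modes of $R$ where the paper just says ``similar'').
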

\begin{proof}
	To prove the proposition, we first prove if $I \preceq \tilde{I}$ and $H$ is the simulation relation, then for all $\tilde{s}\in \tilde{S}$ and $s\in S$ where $(s,\tilde{s})\in H$, $(\tilde{I},\tilde{s})\models\varphi$ implies  $(I,s)\models\varphi$. We assume $\varphi$ is in NNF. The proof proceeds by induction over the structure of $\varphi$. Let $s\in S$, $\tilde{s}\in\tilde{S}$ and $(s,\tilde{s})\in H$.

\begin{itemize}
	\item If $(\tilde{I},\tilde{s})\models p$ where $p$ an atomic formula, then $\gamma(\tilde{s},p)=\top$. By item \ref{item:sim2} in definition \ref{def:simulation} we have $\gamma(s,p)=\top$ which implies $(I,s)\models p$. The case is similar for $\varphi=\neg p$.

	\item If $(\tilde{I},\tilde{s})\models \varphi_1\wedge\varphi_2$, then $(\tilde{I},\tilde{s})\models \varphi_1$ and $(\tilde{I},\tilde{s})\models \varphi_2$. By induction hypothesis we have $(I,s)\models \varphi_1$ and $(I,s)\models \varphi_2$. Therefore, $(I,s)\models \varphi_1\wedge\varphi_2$. The case is similar for $\varphi=\varphi_1\vee\varphi_2$.

	\item Assume $(\tilde{I},\tilde{s})\models AX\varphi_1$. If $s\xrightarrow{\alpha} s'$ is a path in $I$, then by Lemma \ref{lem:equivpath} there exists a path $\tilde{s}\xrightarrow{\tilde{\alpha}} \tilde{s}'$ in $\tilde{I}$ where $(s',\tilde{s}')\in H$. By the assumption we have $(\tilde{I},\tilde{s}')\models \varphi_1$. Then the induction hypothesis implies $(I,s')\models \varphi_1$. Thus we can conclude that $(I,s)\models AX\varphi_1$.

	\item Assume $(\tilde{I},\tilde{s})\models A(\varphi_1 U\varphi_2)$. Let $s_1\xrightarrow{\alpha_2}\dots$ be a path in $I$ where $s_1=s$ and $\tilde{s}_1\xrightarrow{\tilde{\alpha}_2}\dots$ the corresponding path in $\tilde{I}$ where $\tilde{s}_1=\tilde{s}$. By the assumption, there exists some $i\geq 1$ where $(\tilde{I},\tilde{s}_i)\models\varphi_2$ and $(\tilde{I},\tilde{s}_i)\models\varphi_1$ for all $j<i$. By induction hypothesis and Lemma \ref{lem:equivpath}, $(I,s)\models \varphi_1 U\varphi_2$. As this property holds for all the path starting at $s$, we can conclude $(I,s)\models A(\varphi_1 U\varphi_2)$.

	\item Assume $(\tilde{I},\tilde{s})\models A(\varphi_1 R\varphi_2)$. The proof is similar to the case for $(\tilde{I},\tilde{s})\models A(\varphi_1 U\varphi_2)$.

	\item Assume $(\tilde{I},\tilde{s})\models K_i\varphi$. We pick a state $s'\in S$ where $s'\sim_i s$. By item \ref{item:sim4} in definition \ref{def:simulation}, there exists $\tilde{s}'\in \tilde{S}$ where $\tilde{s}'\sim_i \tilde{s}$ and $(s',\tilde{s}')\in H$. By the assumption, $(\tilde{I},\tilde{s}')\models \varphi$. Induction hypothesis implies that $(I,s')\models \varphi$. As this property holds for all the states with accessibility relation $\sim_i$ to $s$, we have  $(I,s')\models K_i\varphi$.
\end{itemize}

Now, if $\tilde{I}\models\varphi$ or in the other words, for all $\tilde{s}_0\in\tilde{S}_0$: $(\tilde{I},\tilde{s})\models\varphi$, then by item \ref{item:sim1} in definition \ref{def:simulation} and the above proof we have for all $s_0\in S_0$: $(I,s)\models\varphi$ or equivalently $I\models\varphi$.

\end{proof}

\subsection{Variable hiding abstraction}
\label{sec:varhide}

Variable hiding is a popular technique in the category of existential abstraction. In our methodology, we consider factorizing the concrete state space into equivalence classes that act as abstract states by abstracting away a set of system propositions. In our approach, the states in each equivalence class are only different in the valuation of the hidden propositions. Also the transitions between the states of the abstract model are defined in such a way that the abstract model simulates the concrete one. Our refinement procedure will be splitting the abstract states by putting back some of the atomic proportions that were hidden in the abstract model. We refine the model by analysing the counterexample generated when verifying safety properties described in ACTLK logic. The model checker will output a counterexample if the property does not hold.

\begin{definition}(Local state relation)
\label{def:locrelation}
Let $I_\mathcal{C}$ be an interpreted system derived from policy $\mathcal{C}$, $L_i$ and $\Phi_i$ be the set of local states and local propositions for the agent $i$, and $\tilde{\Phi}_i\subseteq\Phi_i$. The local relation $\Re_i$ is defined as:
\begin{equation*}
		\text{for all } l_1,l_2\in L_i: \qquad l_1\Re_i l_2\quad\text{ iff }\quad\text{for all }p\in \tilde{\Phi}_i: \gamma_i(l_1,p)=\gamma_i(l_2,p)
\end{equation*}
where $\gamma_i$ is the local interpretation for the agent $i$. The function $h_i:L_i\rightarrow L_i/\Re_i$  is the surjection which maps elements of $L_i$ into equivalence classes of $\Re_i$.
\end{definition}

\begin{definition}[Action classification]
\label{def:actionabs}
Let $\alpha:\varepsilon\leftarrow\ell \in ACT$ and $\tilde{\Phi}\subseteq\Phi$. We define $\alpha':\varepsilon'\leftarrow\ell' \in[\alpha]$ iff $\{\pm p\in\varepsilon'~|~p\in \tilde{\Phi}\}=\{\pm p\in\varepsilon~|~p\in \tilde{\Phi}\}$, $\exists (\Phi\backslash\tilde{\Phi}).\ell'\equiv\exists (\Phi\backslash\tilde{\Phi}).\ell$ and $\textbf{Ag}(\alpha')=\textbf{Ag}(\alpha)$.
\end{definition}

In the above definition, the infix notation $\equiv$ denotes the semantically equivalence relation. Formally $\exists x.f$ for a Boolean function $f$ is defined as $f[0/x]\vee f[1/x]$ which means $f$ could be made to true by putting $x$ to 0 or to 1. If $X=\{x_1,\dots,x_n\}$, then $\exists X.f=\exists x_1\dots\exists x_n.f$.

\begin{definition}[Abstract interpreted system]
\label{def:abspolicy}
Given a policy $\mathcal{C}$, let $\Omega,\Phi$ and $\mathcal{A}^u_{\mathcal{C}}$ be deduced as described in section \ref{sec:buildis} and $I_{\mathcal{C}}$ be the derived interpreted system.
Let $\tilde{\Phi}\subseteq\Phi$ and $\tilde{\Omega}=\Omega$. We define Interpreted system $\tilde{I}_{\mathcal{C}}$ as:
\begin{equation*}
	\tilde{I}_\mathcal{C}=\langle(\tilde{L}_i)_{i\in\tilde{\Omega}}, (\tilde{P}_i)_{i\in\tilde{\Omega}}, (\tilde{ACT}_i)_{i\in\tilde{\Omega}}, \tilde{S}_0, \tilde{\tau}, \tilde{\gamma}\rangle
\end{equation*}

\noindent where 
\begin{enumerate}
	\item $\tilde{L}_i=L_i/\Re_i$ where $\Re_i$ is defined in definition \ref{def:locrelation} over $L_i$, and $\tilde{S}=\tilde{L}_e\times \tilde{L}_1\times\dots\times \tilde{L}_n$ 
	
	\item $\tilde{ACT}_i=\{[\alpha]\mid \alpha\in \mathcal{A}^u_{\mathcal{C}}\text{ and }\textbf{Ag}(\alpha)=i\}$ and a joint action is a $|\tilde{\Omega}|$-tuple such that at most one of the elements is non-$\Lambda$ - i.e. the system is asynchronous. As before, each joint action is shown by its non-$\Lambda$ element. If $\tilde{\alpha}=[\alpha]$, then the evolution rule for $\tilde{\alpha}$ is $\tilde{\varepsilon}\leftarrow\tilde{\ell}$ where $\tilde{\varepsilon}=\{\pm p\in\varepsilon~|~p\in \tilde{\Phi}\}$ and $\tilde{\ell}=\exists (\Phi\backslash\tilde{\Phi}).\ell$ 
	
	\item $\tilde{S}_0=\{(h_i(l_i(s)))_{i\in\tilde{\Omega}}\mid s\in S_0\}$ where $h_i$ as in definition \ref{def:locrelation} maps the elements of $L_i$ to $\tilde{L}_i$

	\item For all $\tilde{l}\in\tilde{L}_i$ and for all $p\in\tilde{\Phi}_i$ we have $\tilde{\gamma}_i(\tilde{l},p)=\gamma_i(l,p)$ where $\tilde{l}=h_i(l)$ 
	
	\item $\tilde{P}_i$  is the protocol for agent $i$ where for all $\tilde{l}\in \tilde{L}_i$: $\tilde{P}_i(\tilde{l})=\tilde{ACT}_i$ 
	
	\item $\tilde{\tau}$ is the transition function defined as follows: If $\tilde{\alpha}$ is a joint action, $\tilde{s}\in \tilde{S}$ and $\tilde{\Theta}_{\tilde{\alpha}}$ is the symbolic transition function for interpreted system $\tilde{I}_\mathcal{C}$ and action $\tilde{\alpha}$, then $\tilde{\tau}(\tilde{\alpha},\tilde{s}) =\tilde{s}'\;\text{ if }\;\tilde{\Theta}_{\tilde{\alpha}}(\{\tilde{s}\})=\{\tilde{s}'\}$

\end{enumerate}
\end{definition}

\begin{proposition}
\label{prop:policysim}
If $I_\mathcal{C}$ is the interpreted system derived from policy $\mathcal{C}$ and $\tilde{I}_\mathcal{C}$ is defined as in definition \ref{def:abspolicy}, then $I_{\mathcal{C}} \preceq \tilde{I}_{\mathcal{C}}$.
\end{proposition}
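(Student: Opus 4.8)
The plan is to exhibit an explicit simulation relation $H\subseteq S\times\tilde S$ and then verify, one by one, the four clauses of Definition \ref{def:simulation}. The obvious candidate is the coordinatewise collapse $h(s)=(h_i(l_i(s)))_{i\in\Omega}$ built from the surjections $h_i$ of Definition \ref{def:locrelation}; since each $h_i$ is a function, $H=h$ is a function, so throughout I only ever need to check the clauses for the single abstract state $\tilde s=h(s)$. Clause \ref{item:sim1} can be read off directly from the definition $\tilde S_0=\{(h_i(l_i(s)))_{i\in\Omega}\mid s\in S_0\}$. Clause \ref{item:sim2} follows from clause~4 of Definition \ref{def:abspolicy} chased through the definitions of $\gamma$ and $\tilde\gamma$: if $p\in\tilde\Phi$, say $p\in\tilde\Phi_i$, then $\tilde\gamma(\tilde s,p)=\tilde\gamma_i(h_i(l_i(s)),p)=\gamma_i(l_i(s),p)=\gamma(s,p)$. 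In particular this records that $\tilde s$ and $s$ assign the same truth values to all propositions in $\tilde\Phi$, which is what makes the rest work.

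The substantive clause is \ref{item:sim3}. Given a concrete transition $s\xrightarrow{\alpha}s'$, I would take $\tilde\alpha=[\alpha]$ — for the degenerate case $\alpha=\Lambda$ there is nothing to prove since $\tilde\tau(\tilde s,\Lambda)=\tilde s=h(s)$ — whose evolution rule is $\tilde\varepsilon\leftarrow\tilde\ell$ with $\tilde\varepsilon=\{\pm p\in\varepsilon\mid p\in\tilde\Phi\}$ and $\tilde\ell=\exists(\Phi\setminus\tilde\Phi).\ell$, and which lies in $\tilde{ACT}_{\textbf{Ag}(\alpha)}$ because $[\cdot]$ preserves the acting agent. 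Two things then need checking. First, that $\tilde\tau(\tilde s,[\alpha])$ is defined, i.e.\ $\tilde\Theta_{[\alpha]}(\{\tilde s\})$ is a singleton: since $\tau(\alpha,s)=s'$ is defined we have $(I_\mathcal{C},s)\models\ell$, and because $\tilde s$ agrees with $s$ on $\tilde\Phi$, the valuation of $s$ on $\Phi\setminus\tilde\Phi$ is a witness showing $(\tilde I_\mathcal{C},\tilde s)\models\exists(\Phi\setminus\tilde\Phi).\ell=\tilde\ell$, so $\tilde\Theta_{[\alpha]}(\{\tilde s\})=\{\tilde s'\}$ with $\tilde s'=\tilde s[p\mapsto\top\mid +p\in\tilde\varepsilon][p\mapsto\bot\mid -p\in\tilde\varepsilon]$. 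Second, that $\tilde s'=h(s')$; this is a comparison of $\tilde\Phi$-valuations only: for $p\in\tilde\Phi$ we have $+p\in\tilde\varepsilon\iff +p\in\varepsilon$ and $-p\in\tilde\varepsilon\iff -p\in\varepsilon$, so $\tilde s'$ sets $p$ to $\top$ exactly when $s'$ does, to $\bot$ exactly when $s'$ does, and otherwise $\tilde s'$ agrees with $\tilde s$, which agrees with $s$, which agrees with $s'$ (as $p$ is untouched by $\varepsilon$); hence $(s',\tilde s')\in H$.

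Clause \ref{item:sim4} needs one preliminary observation, since $\sim_i$ is defined only between reachable states: $h$ sends reachable states of $I_\mathcal{C}$ to reachable states of $\tilde I_\mathcal{C}$. This is a straightforward induction on the length of a witnessing run, with clause \ref{item:sim1} giving the base case and clause \ref{item:sim3} the induction step — it is exactly the instance of Lemma \ref{lem:equivpath} for this particular $H$. Then, for $(s,\tilde s)\in H$ and $s'$ with $s\sim_i s'$, both $s,s'$ are reachable and $l_i(s)=l_i(s')$; taking $\tilde s'=h(s')$, which is reachable, we get $l_i(\tilde s)=h_i(l_i(s))=h_i(l_i(s'))=l_i(\tilde s')$, so $\tilde s\sim_i\tilde s'$, and $(s',\tilde s')\in H$ by construction. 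That discharges all four clauses, giving $I_\mathcal{C}\preceq\tilde I_\mathcal{C}$.

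I expect the only real obstacle to be clause \ref{item:sim3}: one must keep straight that abstract global states are genuine $\tilde\Phi$-valuations (the hidden coordinates having been quotiented away), check that the abstract symbolic transition $\tilde\Theta_{[\alpha]}$ reproduces precisely the $\tilde\Phi$-restriction of the concrete successor, and see that the existentially quantified guard $\tilde\ell$ is enabled at $\tilde s$ exactly because the stronger guard $\ell$ is enabled at the more informative concrete state $s$ — this is where the ``over-approximation'' direction is used. The remaining clauses are bookkeeping about the product structure of global states and the agent-preserving definition of $[\cdot]$.
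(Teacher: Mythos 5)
Your proposal is correct and follows essentially the same route as the paper's proof: the same collapse map $h(s)=(h_i(l_i(s)))_{i\in\Omega}$, the same verification of the four simulation clauses, with clause \ref{item:sim3} handled by noting that $\ell$ holding at $s$ witnesses $\exists(\Phi\backslash\tilde\Phi).\ell$ at $\tilde s$ and that $\tilde\varepsilon$ is the $\tilde\Phi$-restriction of $\varepsilon$. You are in fact slightly more careful than the paper on two points it glosses over — the $\Lambda$ case in clause \ref{item:sim3} and the reachability requirement built into $\sim_i$ for clause \ref{item:sim4} — but these are refinements of the same argument, not a different one.
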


\begin{proof}
Let $h:S\rightarrow \tilde{S}$ be a function where $h(s)=(h_i(l_i(s)))_{i\in\tilde{\Omega}}$ and $h_i$ is defined as in definition \ref{def:locrelation}. We show that $\tilde{I}_{\mathcal{C}}$ simulates $I_{\mathcal{C}}$ under $h$. Item \ref{item:sim1} in definition \ref{def:simulation} trivially holds by property (3). Item \ref{item:sim2} holds by property (4) and the fact that if $p\in\tilde{\Phi}$, then there is an agent $i$ where $p\in\tilde{\Phi}_i$ and we have $\tilde{\gamma}(\tilde{s},p)=\tilde{\gamma}_i(\tilde{l}_i(\tilde{s}),p)$.

Now assume that $h(s)=\tilde{s}$ and $\tau(\alpha,s)=s'$, which is equivalent to $\Theta_{\alpha}(\{s\})=\{s'\}$. If $\alpha:\varepsilon\leftarrow\ell$ can be performed in $s$, then we have $(I,s)\models\ell$. It is trivial to show that $(I,s)\models\exists(\Phi\backslash\tilde{\Phi}). \ell$ using structural induction. Since the formula $\exists (\Phi\backslash\tilde{\Phi}). \ell$ only contains the propositions in $\tilde{\Phi}$, then by item  \ref{item:sim2} in definition \ref{def:simulation} we have $(\tilde{I},\tilde{s})\models \exists (\Phi\backslash\tilde{\Phi}). \ell$. Let $\tilde{\alpha}=[\alpha]$. By definition \ref{def:actionabs}, $\tilde{\alpha}$ can be performed in $\tilde{s}$. From $\tilde{\varepsilon}\subseteq\varepsilon$ we infer that the performance of $\tilde{\alpha}$ on $\tilde{s}$ results in a state $\tilde{s}'$ where all the propositions in $\tilde{\Phi}$ have the same value in $\tilde{s}'$ as in $s'$. Hence, $h(s')=\tilde{s}'$ as required for item \ref{item:sim3} in definition \ref{def:simulation}.

Let us assume that $h(s)=\tilde{s}$ and $s\sim_i s'$. Therefore $l_i(s)=l_i(s')$ which means that for all $p\in\Phi_i:\; \gamma(s,p)=\gamma(s',p)$. Since $\tilde{\Phi}\subseteq\Phi$, then $\tilde{\Phi}_i\subseteq\Phi_i$. By item  \ref{item:sim2} in definition \ref{def:simulation}, for all $p\in\tilde{\Phi}_i:\; \gamma(s,p)=\tilde{\gamma}(\tilde{s},p)$. Let us assume that $h(s')=\tilde{s}'$. Then for all $p\in\tilde{\Phi}_i:\; \gamma(s',p)=\tilde{\gamma}(\tilde{s}',p)$. Hence we have for all $p\in\tilde{\Phi}_i:\; \tilde{\gamma}(\tilde{s},p)=\tilde{\gamma}(\tilde{s}',p)$. Therefore $\tilde{s}\sim_i\tilde{s}'$ as required for item \ref{item:sim4}.
\end{proof}

\begin{definition}
We define $h_A:ACT\rightarrow\tilde{ACT}$ as the surjection that maps the actions in the concrete model to the actions in the abstract one.
\end{definition}

Given a policy, by using Proposition \ref{prop:policysim} we can build up an abstract access control system by hiding a set of propositions and abstracting the evolution rules. Now by proposition \ref{prop:verification}, it is possible to verify ACTLK properties over the abstract model, and refine the abstraction, if the property does not hold and the counterexample is found to be spurious.

\section{Automated refinement}
\label{sec:ref}

\begin{figure}[t]
	\centering
	\begin{overpic}[width=6cm]{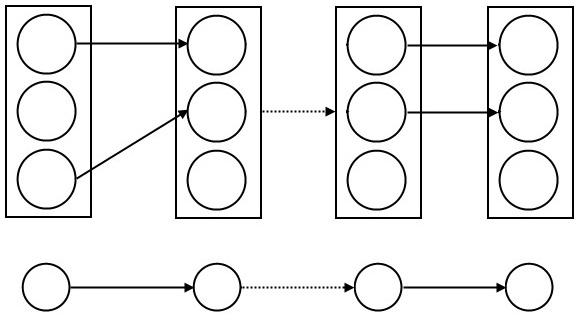}
		\put(19,50){$\alpha_{11}$}		
		\put(73,50){$\alpha_{n1}$}		
		\put(19,36){$\alpha_{12}$}		
		\put(73,38){$\alpha_{n2}$}		
		\put(19,7){$\tilde{\alpha}_{1}$}		
		\put(74,7){$\tilde{\alpha}_{n}$}		
	\end{overpic}
	\caption{The counterexample provided by the abstract model may not be valid on the concrete one. The labels represent the actions that result in the transitions.}
\label{fig:spurious}
\end{figure}

Our counterexample based abstraction refinement method consists of three steps:
\begin{itemize}
	\item \emph{Generating the initial abstraction}: It is done by examining transition blocks corresponding to the variables and constructing clusters of variables which interfere with each other via transition conditions. In our approach, we build the simplest possible initial abstract model by only retaining only the propositions appear in specification $\varphi$ that we aim to verify.
	\item \emph{Model-checking the abstract structure}: Model-checking will be performed on the abstract model for a specification $\varphi$. If the abstract model satisfies $\varphi$, then it can be concluded that the concrete model also satisfies $\varphi$. If the abstract model checking generates a counterexample, it should be checked if the counterexample is an actual counterexample for the concrete model. If it is a spurious counterexample in the concrete model as in figure \ref{fig:spurious}, the abstract system should be refined by proceeding to the next step.
	\item \emph{Refining the abstraction}: The counterexample guided framework refines the abstract model by partitioning the states in abstract model in such a way that the refined model does not admit the same counterexample. For the refinement, we turn some of the invisible variables into visible. After refinement of the abstract model, step 2 will be proceeded.
\end{itemize}

The process of abstraction and refinement will eventually terminate, as in the worst case, the refined model becomes the same as the concrete one, which is a finite state model. Therefore in the worst case, the verification will turn into the verification of the concretised model.

\subsection{Generating the initial abstraction}

For automatic abstraction refinement, we build the initial model as simple as possible. For an ACTLK formula $\varphi$, we keep all the atomic propositions that appear in $\varphi$ visible in the abstract model and hide the rest. The abstract model is built up by definition \ref{def:abspolicy}.

\subsection{Validation of counterexamples}
\label{sec:validation}

The structure of a counterexample created by the verification of an ACTLK formula is different from the counterexample generated in the absence of knowledge modality. In an ACTLK counterexample, we have epistemic relations as well as temporal ones. Analysis of such counterexamples is more complicated than the counterexamples for temporal properties.

A counterexample for a safety property in ACTLK is a loop-free tree-like graph with states as vertices, and temporal and epistemic transitions as edges. Every counterexample has an initial state as the root. A temporal transition in the graph is labelled with its corresponding action and epistemic transition is labelled with the corresponding epistemic relation. We define a \emph{temporal path} as a path that contains only temporal transitions. An \emph{epistemic path} contains at least one epistemic transition. Every state in the counterexample is \emph{reachable from an initial state} in the model, which may differ from the root. For any state $s$, we write also $s$ for the empty path which starts and finishes in $s$.

\begin{figure}
	\centering
	\begin{overpic}[width=5cm]{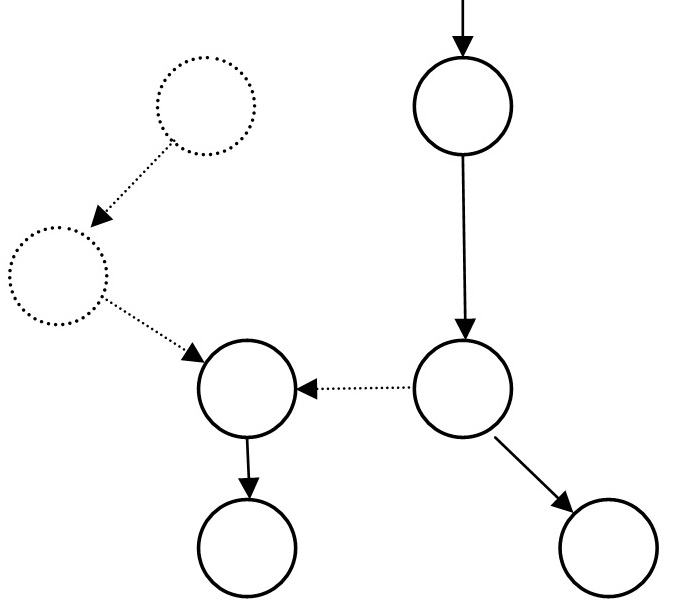}
		\put(64,71){$\tilde{s}_0$}		
		\put(70,51){$\tilde{\alpha}_1$}		
		\put(64,29){$\tilde{s}_1$}		
		\put(78,20){$\tilde{\alpha}_2$}		
		\put(84,6){$\tilde{s}_3$}		
		\put(27,71){$\tilde{s}'_0$}	
		\put(10,65){$\tilde{\alpha}'_1$}
		\put(5,46){$\tilde{s}'_1$}	
		\put(12,34){$\tilde{\alpha}'_2$}
		\put(32,30){$\tilde{s}'_2$}	
		\put(40,18){$\tilde{\alpha}'_3$}
		\put(32,6){$\tilde{s}'_3$}	
		\put(47,35){$\sim_a$}
	\end{overpic}
	\caption{A tree-like counterexample generated by the verification of an ACTLK safety property over the abstract model. In the diagram, $\tilde{s}_0, \tilde{s}'_0\in S_0$ and $\tilde{s}_1\sim_a \tilde{s}'_2$. As reachability is a requirement for $\tilde{s}_1\sim_a \tilde{s}'_2$ and $\tilde{s}_1$ is already reachable, the temporal path $\tilde{s}'_0\xrightarrow{\tilde{\alpha}'_1}\tilde{s}'_1\xrightarrow{\tilde{\alpha}'_2}\tilde{s}'_2$ provides the witness for the reachability of $\tilde{s}'_2$. Considering this witness is required in counterexample checking.}
	\label{fig:cte}
\end{figure}

\textbf{Counterexample formalism:} A tree is a finite set of temporal and epistemic paths with an initial state as the root. Each path begins from the root and finishes at a leaf. For an epistemic transition over a path, we use the same notation as the epistemic relation while we consider the transition to be from left to the right. For instance, the tree in the figure \ref{fig:cte} is formally presented by:
\begin{equation*}
\{\tilde{s}_0\xrightarrow{\tilde{\alpha}_1}\tilde{s}_1\xrightarrow{\tilde{\alpha}_2}\tilde{s}_3,\; \tilde{s}_0\xrightarrow{\tilde{\alpha}_1}\tilde{s}_1\sim_a\tilde{s}'_2\xrightarrow{\tilde{\alpha}'_3}\tilde{s}'_3\}
\end{equation*}

To verify a tree-like counterexample, we traverse the tree in a \emph{depth-first} manner. An abstract counterexample is valid in the concrete model if a real counterexample in the concrete model corresponds to it.

We use the notation $s\rightarrow s'$ when the type of the transition from $s$ to $s'$ is not known.

\begin{definition}[Vertices, root]
Let $\tilde{ce}$ be a counterexample. Then \textbf{Vert}$(\tilde{ce})$ denotes the set of all the states that appear in $\tilde{ce}$. \textbf{Root}$(\tilde{ce})$ denotes the root of $\tilde{ce}$. For a path $\tilde{\pi}$, \textbf{Root}$(\tilde{\pi})$ denotes the state that $\tilde{\pi}$ starts with.
\end{definition}

\begin{definition}[Corresponding paths]
\label{def:corr-path}
Let $\tilde{I}$ be an abstract model of the interpreted system $I$, $h$ be the abstraction function, and $h_A$ be the function that maps the actions in $I$ to the ones in $\tilde{I}$. The concrete path $\pi=s_1\rightarrow\dots\rightarrow s_n$ in the concrete model corresponds to the path $\tilde{\pi}=\tilde{s}_1\rightarrow\dots\rightarrow\tilde{s}_n$ in the abstract model, if

	\begin{itemize}
		\item For all $1\leq i\leq n:\,\tilde{s}_i=h(s_i)$
		\item If $\tilde{s}_i \xrightarrow{\tilde{\alpha}_{i+1}}\tilde{s}_{i+1}$ is a temporal transition, we have $s_i\xrightarrow{\alpha_{i+1}}s_{i+1}$ where $h_A(\alpha_{i+1})=\tilde{\alpha}_{i+1}$.
		\item If $\tilde{s}_i\sim_a\tilde{s}_{i+1}$ is an epistemic transition, then $s_i\sim_a s_{i+1}$ and $s_{i+1}$ is reachable in the concrete model.
	\end{itemize}
\end{definition}

\begin{definition}[Concrete counterexample]
\label{def:conc-ce}
Let $\tilde{ce}$ be a tree-like counterexample in the abstract model where $\textbf{Root}(\tilde{ce})\in \tilde{S}_0$. A concrete counterexample $ce$ corresponds to $\tilde{ce}$ if $\textbf{Root}(ce)\in S_0$ and there exists a one-to-one correspondence between the states and the paths of the counterexamples $ce$ and $\tilde{ce}$ according to the definition \ref{def:corr-path}.
\end{definition}

\begin{figure*}[t]
\begin{mathpar}
	\inferrule* [left=TemporalCheck]
		{h_A^{-1}(\tilde{\alpha})=\{\alpha_1,\dots ,\alpha_n\}}
		{(\tilde{s}\xrightarrow{\tilde{\alpha}}\tilde{s}'\,||\,\pi,st) \Rightarrow_t (\pi,\bigcup_{i=1}^{n}\Theta_{\alpha_i}(st)\cap h^{-1}(\tilde{s}'))}
\end{mathpar}
\begin{mathpar}
	\inferrule* [left=EpistemicCheck]
		{\pi'=\tilde{s}'_0\xrightarrow{\tilde{\alpha}'_1}\dots\xrightarrow{\tilde{\alpha}'_m}\tilde{s}' \text{ is a temporal path to }\tilde{s}'\text{ where } \tilde{s}'_0\in \tilde{S}_0\\
		(\pi',S_0\cap h^{-1}(\tilde{s}'_0))\Rightarrow^*_t (\tilde{s}',st')\\
		\hat{st}=\{s\in st'~|~l_a(s)\in L_a(st)\}
}
		{(\tilde{s}\sim_a\tilde{s}'\,||\,\pi,st) \Rightarrow_e (\pi,\hat{st})}
\end{mathpar}
	\caption{Temporal and epistemic transition rules. In \RefTirName{EpistemicCheck} rule, $\pi'$ is the witness for the reachability of $\tilde{s}'$ in the abstract model, and $st'$ is the concrete states that are reachable through the concrete paths corresponding to $\pi'$. In the case that the model-checker returns all the abstract paths to $\tilde{s}'$, let us say $\tilde{\Pi}'$, then $st'$ will be calculated as $st'=\bigcup\{st\mid \pi'=\tilde{s}'_0\rightarrow\dots\rightarrow\tilde{s}'\in\tilde{\Pi}', \tilde{s}'_0\in\tilde{S}_0 \text{ and }(\pi',S_0\cap h^{-1}(\tilde{s}'_0))\Rightarrow^*_t (\tilde{s}',st)\}$.}

	\label{fig:forwardC}
\end{figure*}

To \emph{verify a path} in the counterexample, we define two transition rules \RefTirName{TemporalCheck} and \RefTirName{EpistemicCheck} denoted by $\Rightarrow_t$ and $\Rightarrow_e$ as in figure \ref{fig:forwardC}. For a path with the transition $\tilde{s}\xrightarrow{\tilde{\alpha}}\tilde{s}'$ as the head and for the concrete states $st$, the rule $\Rightarrow_t$ finds all the successors of the states in $st$ which reside in $h^{-1}(\tilde{s}')$. If the head of the path is the epistemic transition $\tilde{s}\sim_a \tilde{s}'$, then the rule $\Rightarrow_e$ extracts all the \emph{reachable states} in $h^{-1}(\tilde{s}')$ corresponding to $\pi'$ as the witness of reachability of $\tilde{s}'$, which has common local states with some states in $st\subseteq h^{-1}(\tilde{s})$. Both the temporal and epistemic rules are deterministic.

\begin{definition}
We write $\Rightarrow^*_t$ to denote a sequence of temporal transitions $\Rightarrow_t$. We use $\Rightarrow^*$ to denote a sequence of the transitions $\Rightarrow_t$ or $\Rightarrow_e$.
\end{definition}

\begin{proposition}[Soundness of $\Rightarrow^*_t$]
\label{prop:temp-path}
Let $\tilde{\pi}$ be a temporal path in the abstract model which starts at $\tilde{s}_1$ and ends in $\tilde{s}_n$. If $st_1\subseteq h^{-1}(\tilde{s}_1)$ and $(\tilde{\pi},st_1)\Rightarrow^*_t (\tilde{s}_n,st_n)$  for some $\emptyset\subset st_n\subseteq S$, then there exists a concrete path that starts from a state in $st_1$ and ends in a state in $st_n$.
\end{proposition}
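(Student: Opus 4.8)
I would argue by induction on the length $k$ of the temporal path $\tilde{\pi}$, repeatedly peeling off its head transition with the rule \RefTirName{TemporalCheck} and prepending a matching concrete transition to the path produced by the induction hypothesis. For the base case $k=0$, $\tilde{\pi}$ is the empty path $\tilde{s}_1=\tilde{s}_n$, so the derivation $(\tilde{\pi},st_1)\Rightarrow^*_t(\tilde{s}_n,st_n)$ takes zero steps and $st_n=st_1$; since $\emptyset\subset st_n$ I pick any $s\in st_n=st_1$ and take the empty concrete path $s$, which starts and ends in a state of $st_1=st_n$.

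For the inductive step, write $\tilde{\pi}=\tilde{s}_1\xrightarrow{\tilde{\alpha}_2}\tilde{\pi}''$ with $\tilde{\pi}''$ a temporal path of length $k-1$ starting at $\tilde{s}_2$. The only derivation step applicable to $(\tilde{\pi},st_1)$ is one use of \RefTirName{TemporalCheck}, rewriting it to $(\tilde{\pi}'',st_2)$ with $st_2=\bigcup_{\alpha\in h_A^{-1}(\tilde{\alpha}_2)}\Theta_\alpha(st_1)\cap h^{-1}(\tilde{s}_2)$, after which $(\tilde{\pi}'',st_2)\Rightarrow^*_t(\tilde{s}_n,st_n)$. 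Two facts about $st_2$ are needed: first, $st_2\subseteq h^{-1}(\tilde{s}_2)$ by construction, so the induction hypothesis applies to $\tilde{\pi}''$ with start set $st_2$; second, $st_2\neq\emptyset$, because each $\Theta_\alpha$ maps the empty set to the empty set and $\Rightarrow_t$ is a deterministic function on configurations, so a derivation starting from an empty state set stays empty, which would contradict $st_n\neq\emptyset$. Applying the induction hypothesis gives a concrete path $\rho$ from some $s_2\in st_2$ to some $s_n\in st_n$.

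It then remains to extend $\rho$ by one transition on the left. Since $s_2\in st_2$, the definition of $st_2$ provides a concrete action $\alpha\in h_A^{-1}(\tilde{\alpha}_2)$, say with evolution rule $\varepsilon\leftarrow\ell$, and a state $s_1\in st_1$ with $(I,s_1)\models\ell$ and $s_2=s_1[p\mapsto\top\mid +p\in\varepsilon][p\mapsto\bot\mid -p\in\varepsilon]$, directly from the definition of $\Theta_\alpha$. As this update is deterministic, $\Theta_\alpha(\{s_1\})=\{s_2\}$, hence $\tau(\alpha,s_1)=s_2$ by the definition of $\tau$ (Definition~\ref{def:derivedis}), i.e.\ $s_1\xrightarrow{\alpha}s_2$. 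Prepending this transition to $\rho$ yields a concrete path from $s_1\in st_1$ to $s_n\in st_n$, which closes the induction.

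The only step requiring real attention is the non-emptiness bookkeeping inside the inductive step --- the observation that no intermediate state set of a $\Rightarrow^*_t$ derivation can be empty when the final one is non-empty. This follows at once from $\Theta_\alpha(\emptyset)=\emptyset$ together with determinism of $\Rightarrow_t$, and since it is also convenient for the soundness arguments for $\Rightarrow_e$ and $\Rightarrow^*$ later in the section, I would state it as a small standalone remark and reuse it; the remainder is a routine unfolding of the definitions of $\Theta_\alpha$, $\tau$ and $h_A$.
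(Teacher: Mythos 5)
Your proof is correct and follows essentially the same route as the paper's: induction on the length of the temporal path, peeling off the head transition via \RefTirName{TemporalCheck}, applying the induction hypothesis to the tail, and then using the definition of $\Theta_\alpha$ to find a predecessor $s_1\in st_1$ with $\tau(\alpha,s_1)=s_2$ to prepend. Your explicit bookkeeping that every intermediate state set must be non-empty (since $\Theta_\alpha(\emptyset)=\emptyset$ and $\Rightarrow_t$ is deterministic) is a point the paper leaves implicit, and is a welcome tightening rather than a divergence.
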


\begin{proof}
We use induction over the length of the path. 

\textbf{Base case:} $\tilde{\pi}=\tilde{s}_1$. Then there is no transition from $(\tilde{s}_1,st_1)$ and therefore,  the concrete path is a state in $st_1$.

\textbf{Inductive case:} Assume by inductive hypothesis that for all $\tilde{\pi}=\tilde{s}_i\xrightarrow{\tilde{\alpha}_{i+1}}\dots\xrightarrow{\tilde{\alpha}_{i+k}}\tilde{s}_{i+k}$ of length $k$, if $(\tilde{\pi},st_i)\Rightarrow^*_t (\tilde{s}_{i+k},st_{i+k})$ for some $st_i,st_{i+k}\subseteq S$, then there exists a concrete path which begins at a state in $st_i$ and ends in a state in $st_{i+k}$. Consider that $\tilde{\pi}'=\tilde{s}_{i-1}\xrightarrow{\tilde{\alpha}_i}\tilde{s}_{i}\,||\,\tilde{\pi}$ is a path of the length $k+1$ where  $(\tilde{s}_{i-1}\xrightarrow{\tilde{\alpha}_i}\tilde{s}_{i}\,||\,\tilde{\pi},st_{i-1})\Rightarrow_t (\tilde{\pi},st_i) \Rightarrow^*_t (\tilde{s}_{i+k},st_{i+k})$. By induction hypothesis, there exists a concrete path that begins at some state $s_i\in st_i$ and ends in $s_{i+k}\in st_{i+k}$. By the definition of $\Rightarrow_t$, every state in $st_i$ is the successor of some states in $st_{i-1}$. Therefore, there exists $s_{i-1}\in st_{i-1}$ and $\alpha_i\in h_A^{-1}(\tilde{\alpha}_i)$ such that $\{s_i\}=\Theta_{\alpha_i}(\{s_{i-1}\})$. So we select the corresponding transition in the concrete model to be $s_{i-1} \xrightarrow{\alpha_i} s_i$ which allows $s_{i-1}$ to reach $s_{i+k}$ by the existence of a concrete path from $s_i$ to $s_{i+k}$. 
\end{proof}

By proposition \ref{prop:temp-path} and definition \ref{def:conc-ce},  if $\tilde{\pi}=\tilde{s}_0\xrightarrow{\tilde{\alpha}_1}\dots\xrightarrow{\tilde{\alpha}_n}\tilde{s}_n$ is a path in the counterexample where $(\tilde{\pi},S_0\cap h^{-1}(\tilde{s}_0))\Rightarrow^*_t (\tilde{s}_n,st_n)$, then there exists a corresponding concrete path beginning at an initial state $s_0\in S_0\cap h^{-1}(\tilde{s}_0)$ which ends at some state $s_n\in st_n$.

\begin{proposition}[Soundness of $\Rightarrow^*$]
\label{prop:soundnessoftemp}
Let $\tilde{\pi}=\tilde{s}_1\rightarrow\dots\rightarrow\tilde{s}_n$ be a path in the abstract model. If $st_1\subseteq h^{-1}(\tilde{s}_1)$ and $(\tilde{\pi},st_1)\Rightarrow^* (\tilde{s}_n,st_n)$  for some $\emptyset\subset st_n\subseteq S$, then there exists a concrete path that starts from a state in $st_1$ and ends in a state in $st_n$.
\end{proposition}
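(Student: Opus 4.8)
The plan is to prove the statement by induction on the number of transitions occurring in the abstract path $\tilde\pi$, at each step removing the head transition of $\tilde\pi$ together with the corresponding first step of the derivation $(\tilde\pi,st_1)\Rightarrow^*(\tilde{s}_n,st_n)$, and distinguishing whether that head transition is temporal or epistemic. The base case $\tilde\pi=\tilde{s}_1$ (so $n=1$ and $st_n=st_1$) is immediate: any state of the nonempty set $st_1$ is a concrete path of length zero from a state in $st_1$ to a state in $st_n$. For the inductive step write $\tilde\pi=\tilde{s}_1\to\tilde{s}_2\,||\,\tilde\pi''$ with $\tilde\pi''$ starting at $\tilde{s}_2$; note that whichever rule is applied first, the resulting concrete-state set is nonempty, since $\Theta_\alpha(\emptyset)=\emptyset$ and so an empty set would propagate to $st_n$, contradicting $st_n\neq\emptyset$.

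If the head is a temporal transition $\tilde{s}_1\xrightarrow{\tilde\alpha_2}\tilde{s}_2$, then $(\tilde\pi,st_1)\Rightarrow_t(\tilde\pi'',st_2)\Rightarrow^*(\tilde{s}_n,st_n)$ with $st_2=\bigcup_{\alpha\in h_A^{-1}(\tilde\alpha_2)}\Theta_\alpha(st_1)\cap h^{-1}(\tilde{s}_2)$. Applying the induction hypothesis to $\tilde\pi''$ and $st_2\subseteq h^{-1}(\tilde{s}_2)$ produces a concrete path from some $s_2\in st_2$ to some $s_n\in st_n$; by the definition of $\Rightarrow_t$ there are $s_1\in st_1$ and $\alpha_2\in h_A^{-1}(\tilde\alpha_2)$ with $\Theta_{\alpha_2}(\{s_1\})=\{s_2\}$, i.e. $s_1\xrightarrow{\alpha_2}s_2$, and prepending this transition yields the required concrete path. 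This reproduces the inductive step of Proposition~\ref{prop:temp-path} verbatim, so nothing new is needed here.

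The interesting case is when the head is an epistemic transition $\tilde{s}_1\sim_a\tilde{s}_2$. Then the \RefTirName{EpistemicCheck} rule gives $(\tilde\pi,st_1)\Rightarrow_e(\tilde\pi'',\hat{st})\Rightarrow^*(\tilde{s}_n,st_n)$, where $\hat{st}=\{s\in st'\mid l_a(s)\in L_a(st_1)\}$ and $st'$ is obtained along a temporal witness path $\pi'=\tilde{s}'_0\xrightarrow{\tilde{\alpha}'_1}\dots\xrightarrow{\tilde{\alpha}'_m}\tilde{s}_2$ with $\tilde{s}'_0\in\tilde{S}_0$ from $(\pi',S_0\cap h^{-1}(\tilde{s}'_0))\Rightarrow^*_t(\tilde{s}_2,st')$. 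First I would record an auxiliary strengthening of Proposition~\ref{prop:temp-path}: if a $\Rightarrow^*_t$ derivation starts from a set of reachable concrete states, then every state it ends in is reachable; this follows by a routine induction on path length, since $S_0\cap h^{-1}(\tilde{s}'_0)$ consists of (reachable) initial states and $\Theta_\alpha$ maps reachable states to reachable states. Hence every state of $st'$, and in particular every state of $\hat{st}$, is reachable in the concrete model. Now the induction hypothesis applied to $\tilde\pi''$ and $\hat{st}\subseteq h^{-1}(\tilde{s}_2)$ yields a concrete path from some $\hat{s}\in\hat{st}$ to some $s_n\in st_n$. Since $\hat{s}\in\hat{st}$ we have $l_a(\hat{s})\in L_a(st_1)$, so there is $s_1\in st_1$ with $l_a(s_1)=l_a(\hat{s})$; together with the reachability of $\hat{s}$ this makes $s_1\sim_a\hat{s}$ a legitimate epistemic transition in the sense of Definition~\ref{def:corr-path}, and prepending it to the path starting at $\hat{s}$ gives a concrete path from $s_1\in st_1$ to $s_n\in st_n$, closing the induction.

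I expect the epistemic case to be the crux, and within it two points need care. First, \RefTirName{EpistemicCheck} relies on \emph{every} state of the witness set $st'$ being reachable, whereas Proposition~\ref{prop:temp-path} as stated delivers only that \emph{some} state of $st'$ is; this is precisely why the small strengthening above is required. Second, one must keep straight which endpoint of the spliced epistemic edge Definition~\ref{def:corr-path} asks to be reachable --- it is the target $\hat{s}$, which we have; if one additionally wants the source $s_1$ reachable, as $\sim_a$ strictly demands, one observes that the property ``all states of the current set are reachable'' is preserved by both $\Rightarrow_t$ and $\Rightarrow_e$ and therefore holds at every node of a $\Rightarrow^*$ derivation launched from the root with $st_1=S_0\cap h^{-1}(\tilde{s}_0)$, which is the only regime in which this proposition is invoked.
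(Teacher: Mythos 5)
Your proof is correct and follows essentially the same route as the paper's: induction on the length of the path, peeling off the head transition and splitting on whether it is temporal or epistemic, with the temporal case delegated to the argument of Proposition~\ref{prop:temp-path} and the epistemic case handled by matching local states via $l_a(\hat{s})\in L_a(st_1)$. The only substantive difference is that you explicitly state and justify the strengthening of Proposition~\ref{prop:temp-path} (that \emph{every} state of the witness set $st'$ is reachable, not merely some state), a point the paper's proof invokes implicitly when it asserts the reachability of $s_i$ in the epistemic case.
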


\begin{proof}
For the general form of a path that contains both temporal and epistemic transitions, we use the similar approach as in proposition \ref{prop:temp-path}. 

\textbf{Base case:} $\tilde{\pi}=\tilde{s}_1$. Then there is no transition from $(\tilde{s}_1,st_1)$ and therefore,  the concrete path is a state in $st_1$.

\textbf{Inductive case:} Assume by inductive hypothesis that for all $\tilde{\pi}=\tilde{s}_i\rightarrow\dots\rightarrow\tilde{s}_{i+k}$ of length $k$, if $(\tilde{\pi},st_i)\Rightarrow^* (\tilde{s}_{i+k},st_{i+k})$ for some $st_i,st_{i+k}\subseteq S$, then $\tilde{\pi}$ has a corresponding concrete path which begins at a state in $st_i$ and ends in a state in $st_{i+k}$.

\begin{itemize}
	\item Consider that $\tilde{\pi}'=\tilde{s}_{i-1}\xrightarrow{\tilde{\alpha}_i}\tilde{s}_{i}\,||\,\tilde{\pi}$ is a path of length $k+1$ where  $(\tilde{s}_{i-1}\xrightarrow{\tilde{\alpha}_i}\tilde{s}_{i}\,||\,\tilde{\pi},st_{i-1})\Rightarrow_t (\tilde{\pi},st_i) \Rightarrow^* (\tilde{s}_{i+k},st_{i+k})$. By induction hypothesis, there exists a concrete path that begins at some state $s_i\in st_i$ and ends in $s_{i+k}\in st_{i+k}$. By the same analysis as in the proof of proposition \ref{prop:temp-path}, there exists $s_{i-1}\in st_{i-1}$ and $\alpha_i\in h_A^{-1}(\tilde{\alpha}_i)$ such that $s_{i-1}\xrightarrow{\alpha_i}s_i$. Hence, there exists a concrete path from $s_{i-1}$ to $s_{i+k}$.
	
	\item Consider that $\tilde{\pi}'=\tilde{s}_{i-1}\sim_a\tilde{s}_{i}\,||\,\tilde{\pi}$ is a path of length $k+1$ where  $(\tilde{s}_{i-1}\sim_a \tilde{s}_{i}\,||\,\tilde{\pi},st_{i-1})\Rightarrow_e (\tilde{\pi},st_i) \Rightarrow^* (\tilde{s}_{i+k},st_{i+k})$. By induction hypothesis, there exists a concrete path that begins at some state $s_i\in st_i$ and ends in $s_{i+k}\in st_{i+k}$. By the definition of $\Rightarrow_e$ and proposition \ref{prop:temp-path}, $s_i$ is reachable from some initial states in the concrete model, which is a requirement by definition \ref{def:corr-path}. From $l_a(s_i)\in L_a(st_{i-1})$ we conclude that there exists $s_{i-1}\in st_{i-1}$ such that $l_a(s_i)=l_a(s_{i-1})$. Hence we select $s_{i-1}\sim_a s_{i}$ as the corresponding epistemic transition in the concrete model. Therefore, there exists a concrete path from $s_{i-1}$ to $s_{i+k}$.
\end{itemize}
\end{proof}

In the case that $\tilde{\pi}=\tilde{s}_0\rightarrow\dots\rightarrow\tilde{s}_n$ is a path in the counterexample and $(\tilde{\pi},S_0\cap h^{-1}(\tilde{s}_0))\Rightarrow^* (\tilde{s}_n,st_n)$, then there exists a corresponding concrete path beginning at some initial state $s_0\in S_0\cap h^{-1}(\tilde{s}_0)$ which ends at some state $s_n\in st_n$.

\begin{proposition}[Completeness of $\Rightarrow^*$]
	Let $\tilde{\pi}=\tilde{s}_1\rightarrow\dots\rightarrow\tilde{s}_n$ be a path in the abstract model. If there exists a concrete path $\pi=s_1\rightarrow\dots\rightarrow s_n$ corresponding to $\tilde{\pi}$ and $s_1\in st_1\subseteq h^{-1}(\tilde{s}_1)$, then $(\tilde{\pi},st_1)\Rightarrow^* (\tilde{s}_n,st_n)$ for some $\emptyset \subset st_n \subseteq S$.
\end{proposition}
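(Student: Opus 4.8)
The plan is to prove this by induction over the length of the path $\tilde{\pi}$, mirroring the structure of the soundness proofs (Propositions~\ref{prop:temp-path} and~\ref{prop:soundnessoftemp}) but running the argument in the forward direction. The base case is $\tilde{\pi}=\tilde{s}_1$: here there is no transition to apply, so $(\tilde{\pi},st_1)\Rightarrow^*(\tilde{s}_1,st_1)$ trivially and we take $st_n=st_1$, which is nonempty because $s_1\in st_1$. For the inductive step, I would peel off the \emph{first} transition of $\tilde{\pi}$ and split into the temporal and epistemic cases, showing in each case that the corresponding rule fires and produces a nonempty set of concrete states that still contains the image of the concrete witness path, so that the induction hypothesis applies to the tail.

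In the temporal case, $\tilde{\pi}=\tilde{s}_1\xrightarrow{\tilde{\alpha}_2}\tilde{s}_2\,||\,\tilde{\pi}''$, and the corresponding concrete path has a first transition $s_1\xrightarrow{\alpha_2}s_2$ with $h_A(\alpha_2)=\tilde{\alpha}_2$ and $h(s_2)=\tilde{s}_2$. Applying \RefTirName{TemporalCheck} to $(\tilde{\pi},st_1)$ yields $(\tilde{\pi}'',st_2)$ where $st_2=\bigcup_{i}\Theta_{\alpha_i}(st_1)\cap h^{-1}(\tilde{s}_2)$ over $h_A^{-1}(\tilde{\alpha}_2)=\{\alpha_1,\dots,\alpha_n\}$. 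Since $\alpha_2$ is one of the $\alpha_i$, $s_1\in st_1$, and $\Theta_{\alpha_2}(\{s_1\})=\{s_2\}$ with $h(s_2)=\tilde{s}_2$, we get $s_2\in st_2$, so $st_2\neq\emptyset$ and $s_2\in st_2\subseteq h^{-1}(\tilde{s}_2)$. The tail $\pi''=s_2\rightarrow\dots\rightarrow s_n$ is a concrete path corresponding to $\tilde{\pi}''$, so the induction hypothesis gives $(\tilde{\pi}'',st_2)\Rightarrow^*(\tilde{s}_n,st_n)$ for some nonempty $st_n$, and prepending the temporal step finishes this case.

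In the epistemic case, $\tilde{\pi}=\tilde{s}_1\sim_a\tilde{s}_2\,||\,\tilde{\pi}''$, and by the definition of corresponding paths (Definition~\ref{def:corr-path}) we have $s_1\sim_a s_2$ with $s_2$ reachable in the concrete model, and $h(s_2)=\tilde{s}_2$. Applying \RefTirName{EpistemicCheck}, we first need a temporal witness path $\pi'$ to $\tilde{s}_2$ in the abstract model from some $\tilde{s}'_0\in\tilde{S}_0$; such a path exists because $\tilde{s}_2$ appears in a counterexample and is therefore reachable (and by Proposition~\ref{prop:policysim} / Lemma~\ref{lem:equivpath} the concrete reachability witness for $s_2$ projects to one). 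The rule computes $st'$, the concrete reachable states in $h^{-1}(\tilde{s}_2)$ obtained along corresponding concrete paths for $\pi'$; since $s_2$ is reachable and $h(s_2)=\tilde{s}_2$, Proposition~\ref{prop:temp-path} (completeness direction, i.e.\ the forward rule applied to the reachability witness) ensures $s_2\in st'$. Then $\hat{st}=\{s\in st'\mid l_a(s)\in L_a(st_1)\}$; because $s_1\in st_1$ and $l_a(s_1)=l_a(s_2)$, we get $l_a(s_2)\in L_a(st_1)$, hence $s_2\in\hat{st}$, so $\hat{st}\neq\emptyset$ and $s_2\in\hat{st}\subseteq h^{-1}(\tilde{s}_2)$. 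Again the induction hypothesis applied to $\tilde{\pi}''$ and $\hat{st}$ with witness $s_2$ closes the case.

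The main obstacle I anticipate is the epistemic case, specifically handling the witness path $\pi'$ in \RefTirName{EpistemicCheck} cleanly. The subtlety is that the rule fixes \emph{some} abstract witness path (or, in the variant described in Figure~\ref{fig:forwardC}, the union over all such paths), and I must argue that the concrete reachability witness for $s_2$ — which exists because $s_2$ is reachable — gives rise to a concrete path corresponding to \emph{that} abstract $\pi'$, so that $s_2$ actually lands in $st'$. This requires invoking that $h$ maps the concrete reachability witness to an abstract temporal path to $\tilde{s}_2$ (Lemma~\ref{lem:equivpath}), and then appealing to the forward/completeness behaviour of $\Rightarrow^*_t$ on that path; if the rule uses a single fixed $\pi'$ rather than the union, one additionally needs that all abstract temporal witnesses to $\tilde{s}_2$ yield the same membership fact for $s_2$, which is most smoothly handled by adopting the union formulation of $st'$ from Figure~\ref{fig:forwardC}. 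The temporal case and the bookkeeping about nonemptiness and the $st\subseteq h^{-1}(\tilde{s})$ invariant are routine by comparison.
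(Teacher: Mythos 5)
Your proof is correct and follows essentially the same route as the paper's: induction on the length of the path, peeling off one transition, and showing in each of the temporal and epistemic cases that the concrete witness state lands in the computed set so that the induction hypothesis applies to the tail. The one difference is that you treat the epistemic case more carefully than the paper does --- the paper simply asserts that the resulting set contains the witness state without discussing which abstract witness path $\pi'$ the \RefTirName{EpistemicCheck} rule uses, whereas you correctly flag that the union formulation of $st'$ from the caption of Figure~\ref{fig:forwardC} is what guarantees $s_2\in st'$.
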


\begin{proof}
	For the completeness proof, we use induction over the length of the counterexamples. 

\textbf{Base case:} $\tilde{\pi}=\tilde{s}_1$ and $\pi=s_1$. Then we will have no transition and the proposition automatically holds.

\textbf{Inductive case:} Assume by inductive hypothesis that for all $\tilde{\pi}=\tilde{s}_i\rightarrow\dots\rightarrow\tilde{s}_{i+k}$ of length $k$, if there exists a path $\pi=s_i\rightarrow\dots\rightarrow s_{i+k}$ which corresponds to $\tilde{\pi}$ and $s_i\in st_i\subseteq h^{-1}(\tilde{s}_i)$, then $(\tilde{\pi},st_i)\Rightarrow^* (\tilde{s}_{i+k},st_{i+k})$ for some $\emptyset \subset st_{i+k} \subseteq S$.

\begin{itemize}
	\item Consider that $\tilde{s}_{i-1}\xrightarrow{\tilde{\alpha}_i}\tilde{s}_{i}\,||\,\tilde{\pi}$ is a path of length $k+1$ which has the corresponding concrete path $s_{i-1}\xrightarrow{\alpha_i}s_{i}\,||\,\pi$. Let $st_{i-1}\in h^{-1}(\tilde{s}_{i-1})$ be a set of states where $s_{i-1}\in st_{i-1}$. Then the transition $(\tilde{s}_{i-1}\xrightarrow{\tilde{\alpha}_i}\tilde{s}_{i}\,||\,\tilde{\pi},st_{i-1})\Rightarrow_t (\tilde{\pi},st_i)$ leads to the set $st_i$ as the successors of the states in $st_{i-1}$ with respect to the actions in $h_A^{-1}(\tilde{\alpha}_{i})$. As $\alpha_i\in h_A^{-1}(\tilde{\alpha}_{i})$, we have $s_i \in st_i$. Therefore by inductive hypothesis,  we have $(\tilde{s}_{i-1}\xrightarrow{\tilde{\alpha}_i}\tilde{s}_{i}\,||\,\tilde{\pi},st_{i-1})\Rightarrow_t (\tilde{\pi},st_i)\Rightarrow^* (\tilde{s}_{i+k},st_{i+k})$ or equivalently $(\tilde{s}_{i-1}\xrightarrow{\tilde{\alpha}_i}\tilde{s}_{i}\,||\,\tilde{\pi},st_{i-1})\Rightarrow^* (\tilde{s}_{i+k},st_{i+k})$.

	\item Consider that $\tilde{s}_{i-1}\sim_a\tilde{s}_{i}\,||\,\tilde{\pi}$  is a path of length $k+1$ which has the corresponding concrete path $s_{i-1}\sim_a s_{i}\,||\,\pi$. Let $st_{i-1}\in h^{-1}(\tilde{s}_{i-1})$ be a set of states where $s_{i-1}\in st_{i-1}$. Then the transition $(\tilde{s}_{i-1}\sim_a\tilde{s}_{i}\,||\,\tilde{\pi},st_{i-1})\Rightarrow_e (\tilde{\pi},st_i)$ leads to the set $st_i$ which contains the reachable states with the same local states as the states in $st_{i-1}$. Therefore, $s_i\in st_i$ and by inductive hypothesis we have $(\tilde{s}_{i-1}\sim_a \tilde{s}_{i}\,||\,\tilde{\pi},st_{i-1})\Rightarrow_e (\tilde{\pi},st_i)\Rightarrow^* (\tilde{s}_{i+k},st_{i+k})$ or equivalently $(\tilde{s}_{i-1}\sim_a \tilde{s}_{i}\,||\,\tilde{\pi},st_{i-1})\Rightarrow^* (\tilde{s}_{i+k},st_{i+k})$.
\end{itemize}

\end{proof}

Forward transition rules in figure \ref{fig:forwardC} are sufficient to check \emph{linear counterexamples} or equivalently, paths. To extend the counterexample checking to tree-like counterexample, extra procedures are required. We show the problem in the following example:

\begin{figure}[t]
	\centering
	\begin{overpic}[width=6cm]{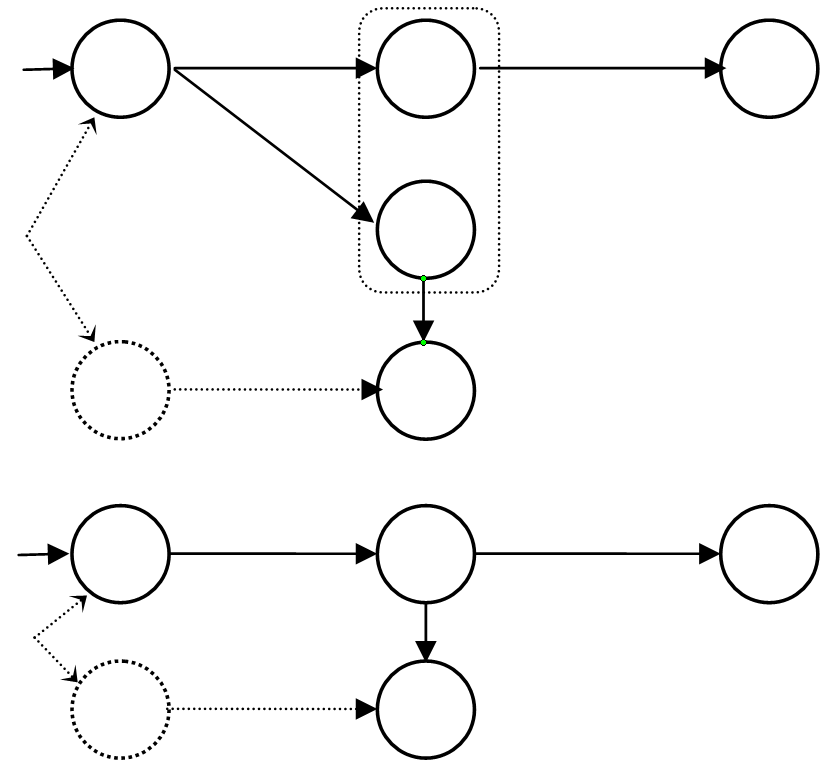}
		\put(3,92){$(s_{\bar{p}ql},s_{r\bar{t}})$}
		\put(40,94){$(s_{pql},s_{rt})$}		
		\put(80,92){$(s_{p\bar{q}l},s_{rt})$}		
		\put(61,63){$(s_{pq\bar{l}},s_{\bar{r}t})$}		
		\put(3,54){$(s_{\bar{p}\bar{q}l},s_{\bar{r}\bar{t}})$}		
		\put(59,44){$(s_{\bar{p}ql},s_{\bar{r}t})$}		
		\put(-3,63){$S_0$}		
		\put(28,86){$\alpha_{11}$}
		\put(28,70){$\alpha_{12}$}
		\put(70,86){$\alpha_2$}
		\put(28,48){$\alpha_3$}
		\put(55,53){$\sim_a$}
		\put(5,34){$(s_{\bar{p}q},s_{\bar{t}})$}
		\put(41,34){$(s_{pq},s_{t})$}		
		\put(81,34){$(s_{p\bar{q}},s_{t})$}	
		\put(9,15){$(s_{\bar{p}\bar{q}},s_{\bar{t}})$}		
		\put(58,6){$(s_{\bar{p}q},s_{t})$}		
		\put(-2,15){$\tilde{S}_0$}
		\put(30,28){$\tilde{\alpha}_1$}
		\put(70,28){$\tilde{\alpha}_2$}
		\put(30,9){$\tilde{\alpha}_3$}
		\put(55,15){$\sim_a$}
	\end{overpic}
	\caption{The transition system on the top is the concrete model and on the bottom is the abstract one obtained by making the propositions $l$ and $r$ invisible.}
	\label{fig:example}
\end{figure}

\begin{example}
\label{ex:spurious}
Figure \ref{fig:example} demonstrates the transition system for a concrete interpreted system on top, and the abstract system on the bottom. The model contains two agents, $e$ as the environment and $a$ as regular agent. States are shown as tuples where the first element is the local state of $e$ and the second is the local state of $a$. The diagram distinguishes the states by using the value of local propositions as the subscript. The abstract model is generated by making the local proposition $l$ of environment and $r$ of agent $a$ invisible.

We aim to verify $AG(p\rightarrow (K_a p \vee AGq))$ over the concrete model. This property holds for the original model, while it does not hold for the abstract one. The counterexample generated is:
\begin{equation*}
	\tilde{ce}=\{(s_{\bar{p}q},s_{\bar{t}})\xrightarrow{\tilde{\alpha}_1} (s_{pq},s_{t})\xrightarrow{\tilde{\alpha}_2}(s_{p\bar{q}},s_{t}),
		(s_{\bar{p}q},s_{\bar{t}})\xrightarrow{\tilde{\alpha}_1} (s_{pq},s_{t})\sim_a (s_{\bar{p}q},s_{t})
		\}
\end{equation*}

To find out if there exists any concrete counterexample that corresponds to $\tilde{ce}$, we check the paths in $\tilde{ce}$ one by one. We show the paths in $\tilde{ce}$ by $\tilde{\pi}_1$ and $\tilde{\pi}_2$. The paths $\tilde{\pi}_1$ and $\tilde{\pi}_2$ correspond to the concrete paths $\pi_1=(s_{\bar{p}ql},s_{r\bar{t}})\xrightarrow{\alpha_{11}} (s_{pql},s_{rt})\xrightarrow{\alpha_2}(s_{p\bar{q}l},s_{rt})$ and $\pi_2=(s_{\bar{p}ql},s_{r\bar{t}})\xrightarrow{\alpha_{12}} (s_{pq\bar{l}},s_{\bar{r}t})\sim_a (s_{\bar{p}ql},s_{\bar{r}t})$. Although all the paths in the counterexample have corresponding concrete paths, the tree does not correspond to a concrete tree. This is because if we select $(s_{pql},s_{rt})$ as the corresponding state for $(s_{pq},s_{t})$, then the leaf $(s_{\bar{p}ql},s_{\bar{r}t})$ is not reachable from it. A similar situation happens when we select $(s_{pq\bar{l}},s_{\bar{r}t})$. Therefore, the tree-like counterexample is spurious.
\end{example}

\begin{figure*}[t]
\begin{mathpar}
	\inferrule* [left=BackwardTCheck]
		{(\pi,S_0\cap h^{-1}(\textbf{Root}(\pi)))\Rightarrow^* (\tilde{s},st') \\
                          h_A^{-1}(\tilde{\alpha})=\{\alpha_1,\dots ,\alpha_n\} \\ 
		  rs=\bigcup_{i=1}^{n}\Theta^{-1}_{\alpha_i}(st)\cap st'}
		{(\pi\,||\,\tilde{s}\xrightarrow{\tilde{\alpha}}\tilde{s}',st) \Leftarrow_t (\pi, rs) \\ r_{\tilde{s}} :=rs}
\end{mathpar}
\begin{mathpar}
	\inferrule* [left=BackwardECheck]
		{
(\pi,S_0\cap h^{-1}(\textbf{Root}(\pi)))\Rightarrow^* (\tilde{s},st'') \\
\pi'=\tilde{s}'_0\xrightarrow{\tilde{\alpha}'_1}\dots\xrightarrow{\tilde{\alpha}'_m}\tilde{s}' \text{ is the temporal path to }\tilde{s}'\text{ where } \tilde{s}'_0\in \tilde{S}_0\\
		(\pi',S_0\cap h^{-1}(\tilde{s}'_0))\Rightarrow^* (\tilde{s}',st')\\
		\hat{st}=\{s\in st'' ~|~l_a(s)\in L_a(st\cap st')\}\\
}
		{(\pi\,||\,\tilde{s}\sim_a\tilde{s}',st) \Leftarrow_e (\pi,\hat{st}) \\ r_{\tilde{s}}:=\hat{st}} 
\end{mathpar}
	\caption{Backward temporal and epistemic transition traversal. $\Theta^{-1}_{\alpha}(st)$ computes the set of predecessors of the states in $st$ with respect to the transitions made by action $\alpha$.}
	\label{fig:backwardC}
\end{figure*}

To verify a tree-like counterexample, we introduce two transition rules \RefTirName{BackwardTCheck} and \RefTirName{BackwardECheck} denoted by $\Leftarrow_t$ and $\Leftarrow_e$. The transition rules find all the predecessors of the states in $st$ (figure \ref{fig:backwardC}) with respect to the temporal or epistemic transitions in a backward manner which reside in the set of reachable states through the path. We write $\Leftarrow^*$ to denote a sequence of backward transitions $\Leftarrow_t$ and $\Leftarrow_e$.

Assume that $\tilde{\pi}=\tilde{s}_0\rightarrow\dots\rightarrow\tilde{s}_n$ is a path in the counterexample $\tilde{ce}$ which $(\tilde{\pi},S_0\cap h^{-1}(\tilde{s}_0))\Rightarrow^* (\tilde{s}_n,st_n)$ for some $\emptyset\subset st_n\subseteq S$. $st_n$ contains all the states in the leaves of the concrete paths corresponding to $\tilde{\pi}$. The point is not all the concrete states that are traveresed in $\Rightarrow^*$ can reach the states in $st_n$. If $\tilde{s}\in\textbf{Vert}(\tilde{\pi})$, then $(\tilde{\pi},st_n)\Leftarrow^* (\tilde{s}_0,st_0)$ finds the set of states $r_{\tilde{s}}$ which contains the reachable states in $h^{-1}(\tilde{s})$ that lead to some states in $st_n$ along the concrete paths corresponding to $\tilde{\pi}$. $st_0$ contains the initial states that lead to the states in $st_n$. We use the notation $r^{\tilde{\pi}}_{\tilde{s}}$ to relate $r_{\tilde{s}}$ with the path $\tilde{\pi}$. Note that to find $r^{\tilde{\pi}}_{\tilde{s}}$, we first need to find $st_n$ through $\Rightarrow^*$ transition.

Assume that $\tilde{\Pi}\subseteq \tilde{ce}$. If $\tilde{s}\in\textbf{Vert}(\tilde{ce})$  then we define $r_{\tilde{s}}^{\tilde{\Pi}}=\cap_{\tilde{\pi}\in\tilde{\Pi}} r_{\tilde{s}}^{\tilde{\pi}}$. If $\tilde{s}\not\in\textbf{Vert}(\tilde{\pi})$, then we stipulate $r_{\tilde{s}}^{\tilde{\pi}}=h^{-1}(\tilde{s})$. We also stipulate $r_{\tilde{s}_0}^{\emptyset}=S_0\cap h^{-1}(\tilde{s}_0)$ where $\tilde{s}_0=\textbf{Root}(\tilde{ce})$ and $r_{\tilde{s}}^{\emptyset}=h^{-1}(\tilde{s})$ for all $\tilde{s}\in\textbf{Vert}(\tilde{ce})$ where $\tilde{s}\neq \tilde{s}_0$. 

\begin{proposition}[Soundness of counterexample checking]
\label{prop:sound}
A counterexample $\tilde{ce}$ in the abstract model has a corresponding concrete one if:
	\begin{enumerate}
		\item \label{item:A} for each path $\tilde{\pi}\in \tilde{ce}$, there exists $\emptyset\subset st \subseteq S$ such that $(\tilde{\pi},S_0\cap h^{-1}(\tilde{s}_0))\Rightarrow^* (\tilde{s}',st)$ where $\tilde{s}_0=\textbf{Root}(\tilde{ce})$ and $\tilde{\pi}$ ends in $\tilde{s}'$.
		\item \label{item:B} for all $\tilde{s}\in\textbf{Vert}(\tilde{ce}): r^{\tilde{ce}}_{\tilde{s}}\neq\emptyset$.
	\end{enumerate}
\end{proposition}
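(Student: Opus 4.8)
The plan is to construct the concrete counterexample $ce$ explicitly from the data produced by the $\Rightarrow^*$ and $\Leftarrow^*$ traversals, and then verify it satisfies Definition~\ref{def:conc-ce}. Condition~\ref{item:A} guarantees that each abstract path $\tilde{\pi}\in\tilde{ce}$ has at least one corresponding concrete path (this is essentially Proposition~\ref{prop:soundnessoftemp}); condition~\ref{item:B} is the extra ingredient that lets these per-path witnesses be glued into a single tree sharing a common root and consistent choices at the branching vertices. So the real content is a simultaneous, top-down construction over the tree structure of $\tilde{ce}$.

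First I would fix, for every vertex $\tilde{s}\in\textbf{Vert}(\tilde{ce})$, the set $r^{\tilde{ce}}_{\tilde{s}}=\bigcap_{\tilde{\pi}\in\tilde{ce}} r^{\tilde{\pi}}_{\tilde{s}}$, which by hypothesis~\ref{item:B} is nonempty, and note that $r^{\tilde{ce}}_{\tilde{s}_0}\subseteq S_0\cap h^{-1}(\tilde{s}_0)$ for the root $\tilde{s}_0$. I would then traverse $\tilde{ce}$ in depth-first (top-down) order, maintaining the invariant that when we reach a vertex $\tilde{s}$ we have already chosen a concrete state $s\in r^{\tilde{ce}}_{\tilde{s}}$ reached along the concrete path built so far. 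At the root, pick any $s_0\in r^{\tilde{ce}}_{\tilde{s}_0}\subseteq S_0$. At a temporal edge $\tilde{s}\xrightarrow{\tilde{\alpha}}\tilde{s}'$ out of an already-fixed $s$, I must produce $s'\in r^{\tilde{ce}}_{\tilde{s}'}$ with $s\xrightarrow{\alpha}s'$ for some $\alpha\in h_A^{-1}(\tilde{\alpha})$; this is exactly what the definition of $\Leftarrow_t$ provides, since $r^{\tilde{\pi}}_{\tilde{s}}=\bigcup_i\Theta^{-1}_{\alpha_i}(r^{\tilde{\pi}}_{\tilde{s}'})\cap st'$ along the relevant path $\tilde{\pi}$ through that edge, and intersecting over all paths keeps $s$ inside this preimage of $r^{\tilde{ce}}_{\tilde{s}'}$. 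At an epistemic edge $\tilde{s}\sim_a\tilde{s}'$, the definition of $\Leftarrow_e$ gives $r^{\tilde{\pi}}_{\tilde{s}}\subseteq\{s\mid l_a(s)\in L_a(r^{\tilde{\pi}}_{\tilde{s}'})\}$ and every state in $r^{\tilde{\pi}}_{\tilde{s}'}$ is reachable (via the witness path $\pi'$, using Proposition~\ref{prop:temp-path}); so from the fixed $s\in r^{\tilde{ce}}_{\tilde{s}}$ I can pick $s'\in r^{\tilde{ce}}_{\tilde{s}'}$ with $l_a(s)=l_a(s')$ and $s'$ reachable, which is precisely a legal concrete epistemic transition by Definition~\ref{def:corr-path}.

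The construction terminates because $\tilde{ce}$ is finite and loop-free, and it yields a concrete tree $ce$ whose root lies in $S_0$, with a vertex- and path-preserving map onto $\tilde{ce}$; checking that this map is the one-to-one correspondence demanded by Definition~\ref{def:conc-ce} is then bookkeeping, using that $h(s)=\tilde{s}$ holds at every chosen vertex by construction (each $s$ was taken from $h^{-1}(\tilde{s})$). I expect the main obstacle to be the branching argument: at a vertex $\tilde{s}$ where several paths of $\tilde{ce}$ diverge, one must use a \emph{single} concrete representative $s\in r^{\tilde{ce}}_{\tilde{s}}$ that simultaneously works for all the sub-paths below it, and argue that the choices made for distinct subtrees do not conflict. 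This is why the statement intersects the per-path sets $r^{\tilde{\pi}}_{\tilde{s}}$ rather than merely requiring each $r^{\tilde{\pi}}_{\tilde{s}}$ to be nonempty; the key lemma to isolate and prove carefully is that for any vertex $\tilde{s}$ and any $s\in r^{\tilde{ce}}_{\tilde{s}}$, and any child edge of $\tilde{s}$ in $\tilde{ce}$, there is a successor (temporal or epistemic as appropriate) landing in $r^{\tilde{ce}}_{\tilde{s}'}$ — this follows from the definitions of $\Leftarrow_t,\Leftarrow_e$ applied path-by-path, combined with the fact that $r^{\tilde{ce}}_{\tilde{s}'}$ already incorporates the constraints of \emph{all} paths through $\tilde{s}'$, so no later branch can be excluded. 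Everything else (base cases, reachability of chosen states, preservation of labels) is routine given Propositions~\ref{prop:temp-path} and~\ref{prop:soundnessoftemp}.
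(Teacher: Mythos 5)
Your construction is essentially the paper's own proof: the paper likewise uses soundness of $\Rightarrow^*$ to obtain, from condition 1, a concrete witness for each path, then uses condition 2 to pick one representative $s\in r^{\tilde{ce}}_{\tilde{s}}$ per vertex and glues the per-path witnesses through these representatives into a concrete tree. You are in fact more explicit than the paper about the one delicate step — that a chosen representative in $r^{\tilde{ce}}_{\tilde{s}}$ admits a (temporal or epistemic) successor landing in the intersection $r^{\tilde{ce}}_{\tilde{s}'}$ rather than merely in the per-path set $r^{\tilde{\pi}}_{\tilde{s}'}$ — a step the paper's proof asserts without further argument when it claims a corresponding concrete path "which contains the selected states as its vertices" can always be picked.
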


\begin{proof}
By the soundness of $\Rightarrow^*$, all the paths in $\tilde{\pi}$ correspond to some concrete paths which satisfy the requirements in the definitions \ref{def:corr-path} and \ref{def:conc-ce}. Now for each $\tilde{s}\in \textbf{Vert}(\tilde{ce})$, we pick a state $s\in r^{\tilde{ce}}_{\tilde{s}}$ as the corresponding state. For each path in $\tilde{ce}$ and between all the corresponding concrete paths, we pick the one which contains the selected states as its vertices. The union of the selected paths builds a concrete counterexample that satisfies the requirements in definition \ref{def:conc-ce}.
\end{proof}

\begin{algorithm*}
\floatname{algorithm}{Procedure}
\caption{Counterexample checking algorithm}\label{ceCheck}
\begin{algorithmic}
\Function{CheckCE}{$\tilde{ce},I,h$} 
\State $\triangleright$ \textbf{Input}: $\tilde{ce}$ is the counterexample, $I$ is the concrete model and $h$ is the abstraction function
\State $\triangleright$ \textbf{Output}: returns \texta{true} if a concrete counterexample exists. Returns \texta{false} otherwise.
	\State $\{\tilde{s}_0,\dots,\tilde{s}_n\}=\textbf{Vert}(\tilde{ce})$ \Comment{$\tilde{s}_0=\textbf{Root}(\tilde{ce})$}
	\State $\tilde{\Pi}=\emptyset$
	\State $r^{\tilde{\Pi}}_{\tilde{s_0}}=S_0\cap h^{-1}(\tilde{s}_0), r^{\tilde{\Pi}}_{\tilde{s}_1}=h^{-1}(\tilde{s}_1),\dots,r^{\tilde{\Pi}}_{\tilde{s}_n}=h^{-1}(\tilde{s}_n)$
	\ForAll {$\tilde{\pi}\in \tilde{ce}$}
		\If {$(\tilde{\pi},r^{\tilde{\Pi}}_{\tilde{s_0}})\Rightarrow^* (\tilde{s}',st)$ and $st\neq\emptyset$} \Comment{$\tilde{\pi}$ ends at the state $\tilde{s}'$}
		\State $\triangleright$ there exists some concrete path corresponding to $\tilde{\pi}$
			\ForAll {$\tilde{s}\in\textbf{Vert}(\tilde{ce})$}
				\State \textbf{determine } $\hat{r}_{\tilde{s}}^{\tilde{\pi}}$ \textbf{ from } $(\tilde{\pi},st)\Leftarrow^* (\tilde{s}_0,st')$
				\State $\triangleright$ determine the concrete states corresponding to $\tilde{s}$
				\State $r^{\tilde{\Pi}\cup\{\tilde{\pi}\}}_{\tilde{s}}:=r^{\tilde{\Pi}}_{\tilde{s}}\cap r_{\tilde{s}}^{\tilde{\pi}}$
				\If {$r^{\tilde{\Pi}\cup\{\tilde{\pi}\}}_{\tilde{s}}=\emptyset$} 
				\State $\triangleright$ no common concrete state for $\tilde{s}$ between concrete paths exists
					\State \textbf{return }\texta{false}
				\EndIf
			\EndFor
			\State $\tilde{\Pi}:=\tilde{\Pi}\cup\{\tilde{\pi}\}$
		\Else
			\State \textbf{return }\texta{false}
		\EndIf
	\EndFor
	\State \textbf{return }\texta{true}
\EndFunction
\end{algorithmic}
\label{algo:cecheck}
\end{algorithm*}

\begin{proposition}[Completeness of counterexample checking]
	Assume that $\tilde{ce}$ corresponds to a concrete counterexample $ce$. Then both the items \ref{item:A} and \ref{item:B} in proposition \ref{prop:sound} hold.
\end{proposition}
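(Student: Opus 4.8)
The plan is to obtain both items directly from the already-established \emph{Completeness of $\Rightarrow^*$}, together with the one-to-one correspondence supplied by Definition~\ref{def:conc-ce}, and then to fill in one backward-direction induction that is dual to the soundness arguments for $\Rightarrow^*_t$ and $\Rightarrow^*$ (Propositions~\ref{prop:temp-path} and \ref{prop:soundnessoftemp}). So the only genuinely new work is a lemma characterising the sets computed by $\Leftarrow_t$ and $\Leftarrow_e$, and everything else is bookkeeping.

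\textbf{Item~\ref{item:A}.} Every path $\tilde{\pi}\in\tilde{ce}$ starts at the root $\tilde{s}_0=\textbf{Root}(\tilde{ce})$. Since $ce$ corresponds to $\tilde{ce}$, Definition~\ref{def:conc-ce} gives a concrete path $\pi$ in $ce$ corresponding to $\tilde{\pi}$ with $\textbf{Root}(\pi)=\textbf{Root}(ce)\in S_0$ and $h(\textbf{Root}(ce))=\tilde{s}_0$, so $\textbf{Root}(\pi)\in S_0\cap h^{-1}(\tilde{s}_0)$. Applying \emph{Completeness of $\Rightarrow^*$} with $st_1:=S_0\cap h^{-1}(\tilde{s}_0)$ (which contains the first state of $\pi$) yields a nonempty $st$ with $(\tilde{\pi},S_0\cap h^{-1}(\tilde{s}_0))\Rightarrow^* (\tilde{s}',st)$, where $\tilde{\pi}$ ends in $\tilde{s}'$; this is exactly item~\ref{item:A}. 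Since the forward rules are deterministic and $\Rightarrow^*$ is sound, this $st$ is the set denoted $st_n$ in the discussion after Proposition~\ref{prop:sound}, and it contains the leaf of every concrete path corresponding to $\tilde{\pi}$.

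\textbf{Item~\ref{item:B}.} Fix a vertex $\tilde{s}\in\textbf{Vert}(\tilde{ce})$. By the one-to-one correspondence between the states of $ce$ and $\tilde{ce}$ (Definition~\ref{def:conc-ce}), there is a single concrete vertex $s\in\textbf{Vert}(ce)$ with $h(s)=\tilde{s}$, and this same $s$ lies on every concrete path of $ce$ realising a path $\tilde{\pi}\in\tilde{ce}$ that passes through $\tilde{s}$. For $\tilde{\pi}$ with $\tilde{s}\notin\textbf{Vert}(\tilde{\pi})$ we have $r^{\tilde{\pi}}_{\tilde{s}}=h^{-1}(\tilde{s})\ni s$ by the stipulation preceding Proposition~\ref{prop:sound}. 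For $\tilde{\pi}$ with $\tilde{s}\in\textbf{Vert}(\tilde{\pi})$, I claim $s\in r^{\tilde{\pi}}_{\tilde{s}}$; granting this, intersecting over all $\tilde{\pi}\in\tilde{ce}$ gives $s\in r^{\tilde{ce}}_{\tilde{s}}$, hence $r^{\tilde{ce}}_{\tilde{s}}\neq\emptyset$.

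\textbf{The claim, and the expected obstacle.} Here $r^{\tilde{\pi}}_{\tilde{s}}$ is the set produced for $\tilde{s}$ by the backward run $(\tilde{\pi},st_n)\Leftarrow^* (\tilde{s}_0,st_0)$, with $st_n$ the endpoint of the forward run from item~\ref{item:A}. The concrete path $\pi\in ce$ corresponding to $\tilde{\pi}$ visits $s$ at vertex $\tilde{s}$, is reachable from an initial state, and ends at a leaf lying in $st_n$, so it suffices to show that $\Leftarrow_t$ and $\Leftarrow_e$ retain exactly the reachable members of $h^{-1}(\tilde{s})$ that sit on such a corresponding concrete path. I would prove this by induction on the suffix of $\tilde{\pi}$ from $\tilde{s}$ to the leaf, mirroring the soundness proofs: the temporal step uses that $\Theta^{-1}_{\alpha}$ returns \emph{all} predecessors, and the epistemic step uses that $s$, being a vertex of $ce$, is reachable in the concrete model and shares its local $a$-state with the corresponding vertex across the epistemic edge, so it survives the filter $\{s\in st''\mid l_a(s)\in L_a(st\cap st')\}$ in \RefTirName{BackwardECheck}. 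The crux — and the part I expect to require the most care — is keeping the forward-computed set $st''$ synchronised with the backward-propagated set along $\tilde{\pi}$, and making sure the reachability witness $\pi'$ invoked by \RefTirName{BackwardECheck} (via \RefTirName{EpistemicCheck}) can be taken to be the one realised in $ce$; both are handled by the remark in Figure~\ref{fig:forwardC} that $st'$ is taken over \emph{all} abstract paths to $\tilde{s}'$, so any reachable concrete witness that $ce$ provides is captured.
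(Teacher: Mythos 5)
Your proposal follows essentially the same route as the paper: item~\ref{item:A} is obtained by applying completeness of $\Rightarrow^*$ to each path from $S_0\cap h^{-1}(\tilde{s}_0)$, and item~\ref{item:B} by showing that the unique concrete vertex $s$ corresponding to $\tilde{s}$ lies in $r^{\tilde{\pi}}_{\tilde{s}}$ for every path, hence in the intersection $r^{\tilde{ce}}_{\tilde{s}}$. In fact you supply more detail than the paper, which simply asserts $s\in r^{\tilde{\pi}}_{\tilde{s}}$ without the backward-induction argument and the discussion of the reachability witness that you correctly identify as the delicate point.
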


\begin{proof}
By definition \ref{def:conc-ce}, there is a one-to-one correspondence between the paths of the two counterexamples. By completeness of $\Rightarrow^*$, item \ref{item:A} holds for all the paths in $\tilde{ce}$. Now Assume that $\tilde{s}\in \textbf{Vert}(\tilde{ce})$ and $s$ is the corresponding state in $ce$. Then for all $\tilde{\pi}\in \tilde{ce}$, we have $s\in r^{\tilde{\pi}}_{\tilde{s}}$, and therefore $s\in r^{\tilde{ce}}_{\tilde{s}}$. Hence we have $r^{\tilde{ce}}_{\tilde{s}}\neq\emptyset$, as required for item \ref{item:B}.
\end{proof}

Procedure \ref{algo:cecheck} expresses the tree-like counterexample checking method in a more refined manner.
\textsc{CheckCE} iterates over the paths in $\tilde{ce}$ and checks if they corresponds to some paths in the concrete model by using proposition \ref{prop:soundnessoftemp} and the transition rule $\Rightarrow^*$. If $\tilde{\pi}$ corresponds to some concrete paths, then for each state $\tilde{s}$ in $\tilde{\pi}$, the algorithm finds all the concrete states $r_{\tilde{s}}^{\tilde{\pi}}$ in $h^{-1}(\tilde{s})$ that lead to the leaf states of the concrete paths by applying $\Leftarrow^*$ over $\tilde{\pi}$. In each loop iteration, $\tilde{\Pi}$ stores the paths in $\tilde{ce}$ that are processed in previous iterations. The set $r^{\tilde{\Pi}}_{\tilde{s}}$ stores the concrete states that are common between the paths in $\tilde{\Pi}$ and should remain non-empty during the process of counterexample checking. The procedure returns \texta{false} if no corresponding tree-like counterexample for $\tilde{ce}$ exists. Otherwise it returns \texta{true}.

\begin{example}
We recall the transition system in example \ref{ex:spurious}. As also discovered in the example, the paths $\tilde{\pi}_1$ and $\tilde{\pi}_2$ correspond to the concrete paths $\pi_1=(s_{\bar{p}ql},s_{r\bar{t}})\xrightarrow{\alpha_{11}} (s_{pql},s_{rt})\xrightarrow{\alpha_2}(s_{p\bar{q}l},s_{rt})$ and $\pi_2=(s_{\bar{p}ql},s_{r\bar{t}})\xrightarrow{\alpha_{12}} (s_{pq\bar{l}},s_{\bar{r}t})\sim_a (s_{\bar{p}ql},s_{\bar{r}t})$. By backward traversing through the first path and for the states in $h^{-1}((s_{pq},s_{t}))$, we find that only the state $(s_{pql},s_{rt})$ leads to the final state on $\pi_1$ and so, $r_{(s_{pq},s_{t})}^{\tilde{\pi}_1}=\{(s_{pql},s_{rt})\}$. The same approach for $\pi_2$ results in $r_{(s_{pq},s_{t})}^{\tilde{\pi}_2}=\{(s_{pq\bar{l}},s_{\bar{r}t})\}$. As $r_{(s_{pq},s_{t})}^{\tilde{\pi}_1}\cap r_{(s_{pq},s_{t})}^{\tilde{\pi}_2}=\emptyset$, the state $(s_{pq},s_{t})$ can not be assigned to a concrete single state. Therefore, $\tilde{ce}$ is spurious.
\end{example}

\subsection{Refinement of the abstraction}
\label{sec:refinement}

If the counterexample is found to be spurious, then the abstraction should be refined. The abstract model is generated by making some propositions in the concrete model invisible. For the refinement, we split some states in the abstract model by putting some of the invisible propositions back into the model. These propositions should be selected in such a way that when verifying the refined model, the same counterexample does not appear again. In this section, we provide the mechanism for refining the abstraction.

Let $\tilde{ce}$ be a spurious counterexample. We define two transition rules \RefTirName{TemporalTree} which is denoted by $\Rightarrow^{\tilde{\Pi}}_t$ and \RefTirName{EpistemicTree} denoted by $\Rightarrow^{\tilde{\Pi}}_e$ where $\tilde{\Pi}\subseteq\tilde{ce}$ in figure \ref{fig:failureC}. As before, $\Rightarrow^{\tilde{\Pi}}_*$ denotes a sequence of temporal and epistemic transitions of the type $\Rightarrow^{\tilde{\Pi}}_t$ and $\Rightarrow^{\tilde{\Pi}}_e$. We use the following technique in order to find the state in the spurious counterexample which needs to be split:
\vspace{2mm}

The state $\tilde{s}_i\in\textbf{Vert}(\tilde{ce})$ is a \emph{failure state} if there exists $\tilde{\Pi}\subseteq\tilde{ce}$ and $\tilde{\pi}\in \tilde{ce}\backslash\tilde{\Pi}$ such that:
\begin{enumerate}
	\item For all $\tilde{s}\in\textbf{Vert}(\tilde{\Pi}):r_{\tilde{s}}^{\tilde{\Pi}}\neq\emptyset$
	\item $\tilde{\pi}=\tilde{\pi}_1~||~\tilde{s}_i(\xrightarrow{\tilde{\alpha}_{i+1}}|\sim_a)\tilde{s}_{i+1}~||~\tilde{\pi}_2$ such that $(\tilde{\pi},r_{\tilde{s}_0}^{\tilde{\Pi}})\Rightarrow^{\tilde{\Pi}}_* (\tilde{\pi}_1,st_d)\Rightarrow^{\tilde{\Pi}}_{(t|e)} (\tilde{\pi}_2,\emptyset)$ for some $st_d\neq \emptyset$.
\end{enumerate}

\begin{figure*}[t]
\begin{mathpar}
	\inferrule* [left=TemporalTree]
		{h_A^{-1}(\tilde{\alpha})=\{\alpha_1,\dots ,\alpha_n\}}
		{(\tilde{s}\xrightarrow{\tilde{\alpha}}\tilde{s}'\,||\,\pi,st) \Rightarrow^{\tilde{\Pi}}_t (\pi,\bigcup_{i=1}^{n}\Theta_{\alpha_i}(st)\cap r_{\tilde{s}'}^{\tilde{\Pi}})}
\end{mathpar}
\begin{mathpar}
	\inferrule* [left=EpistemicTree]
		{\pi'=\tilde{s}'_0\xrightarrow{\tilde{\alpha}'_1}\dots\xrightarrow{\tilde{\alpha}'_m}\tilde{s}' \text{ is a temporal path to }\tilde{s}'\text{ where } \tilde{s}'_0\in \tilde{S}_0\\
		(\pi',S_0\cap h^{-1}(\tilde{s}'_0))\Rightarrow^*_t (\tilde{s}',st')\\
		\hat{st}=\{s\in st'\cap r_{\tilde{s}'}^{\tilde{\Pi}}~|~l_a(s)\in L_a(st)\}
}
		{(\tilde{s}\sim_a\tilde{s}'\,||\,\pi,st) \Rightarrow^{\tilde{\Pi}}_e (\pi,\hat{st})}
\end{mathpar}
	\caption{Transition rules for finding failure state in a tree-like counterexample.}
	\label{fig:failureC}
\end{figure*}

For a spurious counterexample, such $\tilde{\Pi}$ and $\tilde{\pi}$ exists. Otherwise, we will have $r_{\tilde{s}}^{\tilde{ce}}\neq\emptyset$ for all $\tilde{s}\in\textbf{Vert}(\tilde{ce})$, which contradicts proposition \ref{prop:sound}.

Based on Item 1), the sub-tree $\tilde{\Pi}$ has a corresponding counterexample in the concrete model. In item 2), $\tilde{\pi}$ traverses over the concrete states that belong to the set of concrete trees corresponding to $\tilde{\Pi}$ and gets to the set of states $st_d\subseteq h^{-1}(\tilde{s_i})$ with no transition to a state in $r^{\tilde{\Pi}}_{\tilde{s}_{i+1}}$. In the standard terminology as in \cite{CEGAR}, $\tilde{s}_i$ is called \emph{failure state}. We use the term \emph{dead end state} for the states in $st_d$ which the concrete paths end up with and can not go further. \emph{Bad states} are the states in $h^{-1}(\tilde{s}_i)$ that have transition to some states in $r^{\tilde{\Pi}}_{\tilde{s}_{i+1}}$. Note that in a path counterexample, we have that $r^{\tilde{\Pi}}_{\tilde{s}_{i+1}}=h^{-1}(\tilde{s}_{i+1})$. 

The process of finding a failure state in the counterexample $\tilde{ce}$ proceeds as follows:

\begin{enumerate}
	\item Set $\tilde{\Pi}$ to empty set at the beginning
	\item Find $r_{\tilde{s}}^{\tilde{\Pi}}$ for all $\tilde{s}\in\textbf{Vert}(\tilde{ce})$ (as also mentioned in section \ref{sec:validation}, $r_{\tilde{s}_0}^{\emptyset}=S_0\cap h^{-1}(\tilde{s}_0)$ where $\tilde{s}_0=\textbf{Root}(\tilde{ce})$ and $r_{\tilde{s}}^{\emptyset}=h^{-1}(\tilde{s})$ for all $\tilde{s}\in\textbf{Vert}(\tilde{ce})$ where $\tilde{s}\neq \tilde{s}_0$)
	\item Pick a path $\tilde{\pi}\in\tilde{ce}$ that does not exist in $\tilde{\Pi}$
	\item Apply $\Rightarrow^{\tilde{\Pi}}_*$ over $(\tilde{\pi},r_{\tilde{s}_0}^{\tilde{\Pi}})$ to find failure state. If a failure state exists over $\tilde{\pi}$, then exit and refine the model
	\item Add $\tilde{\pi}$ to $\tilde{\Pi}$ and return to step 2. Note that we are considering that the counterexample is found to be spurious (by the procedure \ref{algo:cecheck}) and therefore, such failure state will be found before all the paths in $\tilde{ce}$ are added to $\tilde{\Pi}$.
\end{enumerate}

For the implementation, the above process can be easily incorporated into the procedure \ref{algo:cecheck}.

To refine the model, we find the propositions that having them invisible results in generating spurious counterexample. First assume that the transition from $\tilde{s}_i$ to $\tilde{s}_{i+1}$ is temporal, say $\tilde{s}_i\xrightarrow{\tilde{\alpha}_{i+1}}\tilde{s}_{i+1}$. Two situations can result in a transition of type $\Rightarrow^{\tilde{\Pi}}_t$ from $st_d$ to an empty set of states: 
\begin{itemize}
	\item There exists no $\alpha_{i+1}\in h^{-1}(\tilde{\alpha}_{i+1})$ such that $\Theta_{\alpha_{i+1}}(st_d)\neq\emptyset$. Therefore, no action has the permission to be performed on the states of $st_d$. Assume that $\phi_d$ is the formula that represents the set of states $st_d$. As the state space is finite, the formula representing the states always exists. Therefore, for all $\alpha_{i+1}\in h_A^{-1}(\tilde{\alpha}_{i+1})$ with $\ell_{i+1}$ as the permission, we have $\phi_d\wedge \ell_{i+1}\equiv\bot$. We call $\ell_{i+1}$ \emph{conflict formula} and $\phi_d$ \emph{base formula}. 

	\item For some $\alpha_{i+1}\in h^{-1}(\tilde{\alpha}_{i+1})$ we have $\Theta_{\alpha_{i+1}}(st_d)\neq\emptyset$. By the definition of $\Rightarrow_t$ we have $\Theta_{\alpha_{i+1}}(st_d)\cap  r_{\tilde{s}'_{i+1}}^{\tilde{\Pi}}=\emptyset$ where $r_{\tilde{s}'_{i+1}}^{\tilde{\Pi}}\neq\emptyset$. If $\phi$ is the formula representing $\Theta_{\alpha_{i+1}}(st_d)$ and $\psi$ the formula representing $r_{\tilde{s}'_{i+1}}^{\tilde{\Pi}}$, then we have $\psi\wedge \phi\equiv\bot$. We call $\phi$ conflict formula and $\psi$ base formula.
\end{itemize}

The other situation is when the transition $\tilde{s}_i$ and $\tilde{s}_{i+1}$ is epistemic, say $\tilde{s}_i\sim_a\tilde{s}_{i+1}$. Three situations can result in the epistemic transition $\Rightarrow^{\tilde{\Pi}}_e$ to an empty set of states:
\begin{itemize}
	\item $\pi'$ as the witness of the reachability of $\tilde{s}_{i+1}$ in $\Rightarrow^{\tilde{\Pi}}_e$ is spurious. Then the refinement should be guided by analysing $\pi'$ instead of the main spurious path.

	\item Suppose that $\pi'$ has corresponding concrete paths, i.e. $(\pi',S_0\cap h^{-1}(\tilde{s}'_0))\Rightarrow^*_t (\tilde{s}_{i+1},st')$ where $st'\neq\emptyset$. By the definition of $\Rightarrow_e$, the epistemic transition results in an empty set of states if $st'\cap r_{\tilde{s}'_{i+1}}^{\tilde{\Pi}}=\emptyset$. If $\phi$ is the formula representing $st'$ and $\psi$ the formula representing $r_{\tilde{s}'_{i+1}}^{\tilde{\Pi}}$, then we call $\phi$ conflict formula and $\psi$ base formula.

	\item The third reason for the epistemic transition to an empty set is when no shared \emph{local state} exists  between the states of $st_d$ and $st'\cap r_{\tilde{s}'_{i+1}}^{\tilde{\Pi}}$ where $st'$ is the set of reachable states according to the previous item and both the sets are non-empty. In the other words, $L_a(st_d)\cap L_a(st'\cap r_{\tilde{s}'_{i+1}}^{\tilde{\Pi}})=\emptyset$. The formula representing the local states in $st_d$ with respect to the agent $a$ is called base formula, and the formula representing the local states of $st'\cap r_{\tilde{s}'_{i+1}}^{\tilde{\Pi}}$ is the conflict formula.
\end{itemize}

To refine the model, we return some hidden propositions to separate the set of dead end states from the rest of the states. This can simply be done by adding all the propositions occurring in \emph{conflict clauses} to the abstract model.

\begin{definition}(conflict clause)
Let $\phi$ be the base formula and $\psi$ the conflict formula. Let \textbf{cnf}($\psi$) denote the set containing all the conjuncts appear in conjunctive normal form of $\psi$. Then $c\in \text{\textbf{cnf}}(\psi)$ is a \emph{conflict clause} if $c\wedge\phi\equiv\bot$.
\end{definition}

If the propositions that occur in one of the conflict clauses become visible, then the spurious strategy will not happen in the refined model again. In the case of temporal transition, we add the propositions in the conflict clauses for \emph{all the conflicting actions}. To have the smallest possible refinement, we should look for the conflict classes with the \emph{smallest number} of literals.

\subsection{Going beyond ACTLK}
\label{sec:otherprops}

While this section develops a fully automated abstraction refinement method for the verification of temporal-epistemic properties that reside the category of ACTLK over an access control system which is modelled by an interpreted system, some important epistemic safety properties does not reside in this category. For instance and in a conference paper review system, it is valuable for policy designers to verify that for all reachable states, an author of a paper cannot find out ($\neg K$) who is the reviewer of his own paper (see the first property in example \ref{ex:ctlk}). Although we are able to verify such properties in the concrete model, we cannot apply automated counterexample-guided abstraction and refinement for such properties.

Let us explore the problem. Assume that for the abstract system $\tilde{I}$, abstract state $\tilde{s}$ and agent $a$, $(\tilde{I},\tilde{s})\models\neg K_a\varphi$. That means there exists a state $\tilde{s}'$ such that $\tilde{s}'\sim_a \tilde{s}$ and $(\tilde{I},\tilde{s}')\models\neg\varphi$. If $s$ is a state in the concrete model where $h(s)=\tilde{s}$, then  the satisfaction relation $(\tilde{I},\tilde{s})\models\neg K_a\varphi$ implies $(I,s)\models\neg K_a\varphi$ if it guarantees the existence of a \emph{reachable} state $s'\in h^{-1}(\tilde{s}')$ such that $s'\sim_a s$ and $(I,s')\models\neg\varphi$.

First of all, if such $s'$ exists, the satisfaction relation $(\tilde{I},\tilde{s}')\models\neg\varphi$ still does not imply $(I,s')\models\neg\varphi$ when $\varphi$ is ACTLK except if $\varphi$ is simply a propositional formula which is the case for many of the properties that we are interested in. Second, the relation $\tilde{s}'\sim_a \tilde{s}$ in the abstract model does not imply $s'\sim_a s$ in the concrete model for some reachable state $s'\in h^{-1}(\tilde{s}')$. In the case that $(\tilde{I},\tilde{s}')\not\models\neg\varphi$, the model-checker produces a counterexample that can be checked using the method that is developed in this section and then the abstract model can be refined. In the case that the satisfaction relation holds, the model-checker does not produce any witness.

To complete our work for the properties that deal with the negation of knowledge operator, we restrict the formula in scope of the knowledge operators to propositional formulas. Then we use an interactive refinement procedure in the following way: we abstract the interpreted system in the standard way that we described. If the property does not hold in the abstract model, the counterexample will be checked in the concrete model and the abstract model will be refined if it is required. If the property turned to be true in the abstract model as a result of the satisfaction of $\neg K_a$ (which we would not have any witness in the abstract model), then we refine the local state of the agent $a$ in an interactive manner. In this way, the tool asks the user to selects a set of invisible \emph{local propositions} to be added in the next round if required. This process will continue until a valid counterexample is found, or the local state becomes concretized. In the case that the safety property does not hold in the concrete model (where information leakage vulnerability exists), then there is a chance to find it out with the abstract model when the local states are still abstract.

\section{Experimental results}
\label{sec:experiment}

We have implemented a tool in F\# functional programming language. The font end is a parser that accepts a set of action and read permission rules, a set of objects and a query in the form of $\iota : \varphi$ where $\iota$ is the formula representing the initial states and $\varphi$ is the property we aim to verify. Given the above information, the tool derives an interpreted system based on definition~\ref{def:derivedis} where the initial states of the system are determined by parameter $\iota$ in the query. On the back end, we use MCMAS \cite{Lomuscio:2006:MCMAS} as the model-checking engine. In the presence of abstraction and refinement, the tool feeds MCMAS with the abstracted version of the original interpreted system together with the property $\varphi$. If model-checker returns \texta{true} for an ACTLK property, then the tool returns \texta{true} to the user. Otherwise, the tool automatically checks the generated counterexample based on proposition \ref{prop:sound}, and reports if it is a real counterexample, which will be returned to the user, or verification needs a refinement round. The tool performs an automated refinement if it is required. For the properties that are discussed in section~\ref{sec:otherprops}, the tool asks user to select a set of invisible local variables to be added to the abstract model for the refinement when model-checker returns \texta{true}. This will continue until all the related invisible local variables turn to visible, or a valid counterexample is found.

For this section, we choose one temporal and three epistemic properties for the case study of conference paper review system (CRS) with the information leakage vulnerability described in the introduction. We first verify the query (Query 1) ``$\texta{author}(\texta{p}_1, \texta{a}_1)\wedge \neg\texta{reviewer}(\texta{p}_1, \texta{a}_1) : AG(\neg\texta{reviewer}(\texta{p}_1, \texta{a}_1))$'' which states that if in the initial states, agent $\texta{a}_1$ is the author of paper $\texta{p}_1$ and not the reviewer of his own paper, then it is not possible for $\texta{a}_1$ to be assigned as the reviewer of his paper $\texta{p}_1$. Query 2 ``$\neg\texta{submittedreview}\\(\texta{p}_1,\texta{a}_1) \wedge \texta{reviewer}(\texta{p}_1,\texta{a}_2): AG(K_{\texta{a}_1}\texta{review}(\texta{p}_1,\texta{a}_2)\rightarrow AG(\neg \texta{submittedreview}(\texta{p}_1,\texta{a}_1))$ checks if in the initial states, $\texta{a}_1$ and $\texta{a}_2$ are the reviewers of paper $\texta{p}_1$, and $\texta{a}_1$ has not submitted his own review of $\texta{p}_1$, then $\texta{a}_1$ cannot submit her review if he \emph{reads} the review of $\texta{a}_2$ (knowledge by readability). Query 3 $\texta{author}(\texta{p}_1, \texta{a}_1) :AG(\texta{AllPapersAssigned}\wedge\texta{reviewer}(\texta{p}_1, \texta{a}_2)
\rightarrow \neg K_{\texta{a}_1}\texta{reviewer}(\texta{p}_1, \texta{a}_2))$ asks if $\texta{a}_1$ is the author of $\texta{p}_1$, then it is not possible for $\texta{a}_1$ to find the reviewer of his paper when his paper is assigned to $\texta{a}_2$, which is not ACTLK. Query 4 $\texta{author}(\texta{p}_1, \texta{a}_1) : AG(\texta{AllPapersAssigned}\wedge\texta{reviewer}(\texta{p}_1, \texta{a}_2)
\rightarrow K_{\texta{a}_1}\texta{reviewer}(\texta{p}_1, \texta{a}_2))$ has ACTLK property, which checks if $\texta{a}_1$ can always find who the reviewer of his paper is whenever all the papers are assigned.

\begin{SCfigure}[5][t]
	\centering
	\begin{overpic}[width=6cm,height=3.5 cm]{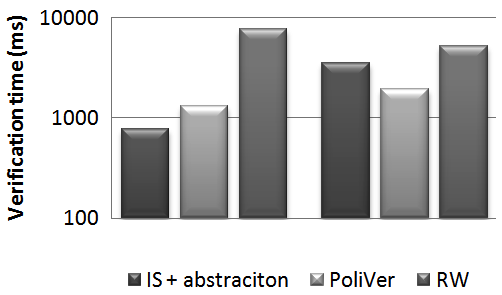}
		\put(27,58){\scriptsize 3 Papers, 7 Agents}
		\put(63,58){\scriptsize 2 Papers, 4 Agents}
		\put(33,10){\scriptsize Query 1}
		\put(74,10){\scriptsize Query 2}
	\end{overpic}
	\caption{Comparison of the verification time for the queries 1 and 2 between our tool which uses MCMAS as the model-checking engine, PoliVer and RW.}
\label{fig:toolCompare}
\end{SCfigure}

\begin{figure*}[t]
\center
{\begin{tabular}{| l | c | c | c | c | c | c |}
\hline
& \multicolumn{2}{| c |}{ Concrete model } & \multicolumn{4}{| c |}{ Abstraction and refinement }\\
\hline
	&time(s)&BDD vars&time(s)&Max BDD vars&last ref time&num of ref\\
\hline
Query 3&6576.5&180&148.3&80&3.28&7\\
\hline
Query 4&6546.4&180&174.1&98&21&12\\
\hline
\end{tabular}}
\caption{A comparison of query verification time (in second) and runtime memory usage (in MB) between the concrete model and automated abstraction refinement method.} 
\label{fig:table}
\end{figure*}

Queries 1 and 2 can be verified in access control policy verification tools like RW and PoliVer, which model knowledge by readability. We compare our tool in the presence of abstraction and refinement with RW and PoliVer from the point of verification time in figure \ref{fig:toolCompare}. It is important to note that when applying abstraction and refinement, a high percentage of evaluation time is spent on generating the whole concrete model at the beginning, invoking executable MCMAS which also invokes Cygwin library, generating abstract model and verifying the counterexample. In most of our experiments, verification of the final abstract model by MCMAS takes less than 10ms. 

The novel outcome of our research is the verification of the queries 3 and 4 where PoliVer and RW are unable to detect information leakage in CRS policy. In PoliVer and RW models, the author never finds a chance to see who the reviewer of his paper is and therefore safety property holds in the system. Modeling in interpreted systems reveals that the author can reason who is the reviewer of his paper when all the papers are assigned. For Query 3, the tool also outputs the counterexample which demonstrates the sequence of actions that allows the author to reason about the reviewer of his paper. Figure \ref{fig:table} shows the practical importance of our abstraction method (interactive refinement for Query 3 and fully automated for Query 4).

\section{Conclusion}

In this research, we introduced a framework for verifying temporal and epistemic properties over access control policies. In order to verify knowledge by reasoning, we used interpreted systems as the basic framework and to make the verification practical for medium to large systems, we extended counterexample-guided refinement known as CEGAR to cover safety properties in ACTLK. Case studies and experimental results show a considerable reduction in time and space when abstraction and refinement are in use. We also applied an interactive refinement for some useful properties that does not reside in ACTLK like the ones that contain the negation of knowledge modality.
As future work, we would like to use these technique to detect information-flow in real world systems such as electronic voting systems~\cite{Clarkson:2008:civitas,Singh:2011:Trivitas,Singh:2013:CC} and social networks.
\medskip

\noindent\textbf{Acknowledgement:} We would like to acknowledge Microsoft Research and EPSRC project TS/I002529/1 ``Trust Domains'' for funding this research.

\bibliographystyle{splncs}
\bibliography{References}

\end{document}